\documentclass[a4paper,12pt]{article}
\usepackage[utf8]{inputenc}

\usepackage{amsmath,amssymb,amsthm,cite,eepic,epsfig,url,psfrag,tabularx,mathtools,wasysym,comment}

\usepackage{braket}
\usepackage{cancel}
\usepackage[normalem]{ulem}
\usepackage{bm}
\usepackage[a4paper, top=20mm, bottom =20mm, textwidth=165mm]{geometry}

\linespread{1.2}
\usepackage[utf8]{inputenc} 
\usepackage[T2A]{fontenc}
\usepackage[english]{babel}
\usepackage{xcolor,hyperref}
\usepackage{pdfpages}

\makeatletter
\def\Appendix{\appendix
  \def\@seccntformat##1{Appendix~\csname the##1\endcsname.~~}}
\makeatother
\theoremstyle{plain}
\newtheorem{theorem}{Theorem}[section]
\newtheorem{lemma}[theorem]{Lemma}
\newtheorem{fact}[theorem]{Proposition}
\newtheorem{proposition}[theorem]{Proposition}
\newtheorem{corollary}[theorem]{Corollary}
\theoremstyle{definition}
\newtheorem{define}{Definition}[section]
\newtheorem{remark}[theorem]{Remark}
\newtheorem{example}[theorem]{Example}

\renewcommand\l{\left}
\renewcommand\r{\right}
\newcommand{\ri}{\mathrm{i}}

\numberwithin{equation}{section}

\title{NSR singular vectors from Uglov polynomials}
\author{Mikhail Bershtein \and Angelina Vargulevich}
\date{}

\begin{document}

            

\maketitle 

\begin{abstract} It was conjectured in \cite{BBT2013} that bosonization of singular vector (in Neveu-Schwarz sector) of $\mathcal{N}=1$ super analog of the Virasoro algebra can be identified with Uglov symmetric function. In the paper we prove this conjecture. We also extend this result to the Ramond sector of $\mathcal{N}=1$ super-Virasoro algebra.
\end{abstract}



\section{Introduction} 

Singular vectors in Verma modules are important for the study and application of the representation theory of the Virasoro algebra and generalizations of this algebra. The existence of the singular vectors usually follows from the  formulas for determinant of the Shapovalov form, but often it is difficult to write down explicit formulas for singular vectors themselves. 

An interesting connection between singular vectors in Verma modules for Virasoro algebra and symmetric Jack functions was discovered in work \cite{mimachi1995}, and then, using different approach, in \cite{AWATA1995}. Namely, it was shown that after bosonization the singular vectors can be identified with 	\emph{Jack} symmetric functions with rectangular Young diagram. In the subsequent paper \cite{SKAO} it was proven that singular vectors for Verma modules of \(q\)-Virasoro algebra after bosonization can be identified with 	\emph{Macdonald} symmetric functions with rectangular Young diagram. In the limit \(q,t \rightarrow 1\) this result implies the Virasoro case.

In this paper we study $\mathcal{N}=1$ super analog of the Virasoro algebra. This algebra has two sectors: Neveu-Schwarz and Ramond, we will abbreviate it to NSR algebra. It was conjectured in \cite{BBT2013} that after certain special bosonization the singular vectors in NS sector are identified with Uglov symmetric functions with rectangular Young diagram. By definition, Uglov symmetric functions \cite{Uglov:1998} are limits of Macdonald polynomials when \(q,t\)  go to root of unity. In the conjecture \cite{BBT2013} the limit \(q,t\rightarrow -1\) appears. This limit was motivated by the AGT correspondence. 

In this paper we extend this conjecture to R sector and prove it in both sectors.\footnote{In the preprint \cite{Yanagida} the proof in NS sector was suggested. However that proof contains serious gaps, and we do not know how to fill them. Our proof is based on different (but related) approach.}  The proof is rather simple (using results from \cite{SKAO}, \cite{BBT2013} and \cite{Itoyama:2013}) and fills the gap in the literature. The idea of the proof is to use result \cite{SKAO} on singular vectors of \(q\)-Virasoro algebra and then take the limit. We show that in the \(q,t\rightarrow -1\) limit of the generating current \(T(z)\) of \(q\)-Virasoro algebra one gets the generating currents of the  NSR algebra with an additional fermion. 

The fact that NSR algebra appears in the limit of \(q\)-Virasoro algebra looks to be even more important than the relation on singular vectors. It was also conjectured in \cite{BBT2013}. Our computations here are very close to ones in \cite{Itoyama:2013}.  It is expected that in more general root of unity limit \(q\)-Virasoro algebra will contain certain coset or parafermion algebra, see \cite{BBT2013}, \cite{Itoyama:2014}, \cite{Kimura:2022}. 

There are another formulas for bosonization of the singular vectors of NSR algebra in terms of Super-Jack polynomials. 
Such formulas were conjectured in \cite{Desrosiers:2012} for NS sector and in \cite{Alarie-Vezina:2013} for R sector and proven in  \cite{Blondeau-Fournier:2016}. It would be interesting to understand the relation between these formulas and formulas in terms of Uglov symmetric functions. 

This paper is organized as follows. In Section \ref{sec2} we recall necessary facts on Macdonald and Uglov polynomials. In Section~\ref{sec3} we define NSR algebra and Verma modules. In Section~\ref{sec4} we recall bosonization of the NSR algebra. In Section~\ref{sec:qVir} we recall \(q\)-Virasoro algebra and prove Theorems \ref{T0T1NS}, \ref{T0T1R} which describe the limit of the algebra. In Section~\ref{sec5} we formulate the main theorem (Theorem \ref{Th:sing:NSR}) and give its proof. Finally, in Section \ref{sec6} we study first orders of the limit of Macdonald operator.

We are grateful to A. Shchechkin for
careful reading of preliminary version of the paper and many useful remarks. The work was partially supported by the HSE University Basic Research Program.

\section{Uglov Symmetric Functions} \label{sec2}
In this section we recall some basic notions about the symmetric polynomials and Macdonald symmetric functions. The reference is \cite{MacDonald1998}. 
 
A partition $\mu=\left(\mu_1,\mu_2,\ldots\right)$ is a sequence of non-negative integers in decreasing order $\mu_1 \geq \mu_2 \geq \ldots\geq 0$. The number of the non-zero $\mu_i$'s is called length of $\mu$ and denoted by  $\ell\left(\mu\right)$. By \(\left|\mu\right|\)  we denote the sum of the \(\mu_i\). The dominance order on partitions is defined as $\mu \geq \nu$ if and only if $\left|\mu\right| = \left|\nu\right|$ and $\mu_1 +\ldots+ \mu_i \geq \nu_1+\ldots\nu_i$ for any $i\geq 1$.
 
Sometimes, 
we will also use a notation 
$\mu=\left(1^{m_1},2^{m_2},\ldots\right)$, 
where $m_i$ denotes the number of times $i$ occurs in $\mu$.
 
Let $x_1,\ldots, x_N$ be independent variables. The symmetric group $S_N $ acts on the polynomial ring $\mathbb{C}\left[x_1,\ldots,x_N\right]$ by permuting the $x$'s. Denote 
\( \Lambda_N=\mathbb{C}\left[x_1,\ldots,x_N\right]^{S_N}.\)
The ring of symmetric functions on infinitely many variables \(\Lambda\) is defined as the inverse limit \(\varprojlim_N  \Lambda_N\). It is a graded ring 
\( \Lambda=\bigoplus_{r\geq 0}\Lambda^m\), where $\Lambda^m$ denotes the space of homogeneous symmetric functions of degree \(m\).

For any partition \(\mu\) by \(m_\mu \in \Lambda_N\) we denote monomial symmetric polynomial. By \(p_k\) for any \(k\in \mathbb{Z}_{>0}\) we denote \(p_k=\sum x_i^k \in \Lambda_N\). The \(N\rightarrow \infty\) limits of \(m_\mu\) and \(p_k\) are well-defined, we will denote them by the same letters \(m_\mu, p_k \in \Lambda\).
    
Let $q,t$ be independent variables and let $\mathbb{F}=\mathbb{Q}\left(q,t\right)$ be the field of rational functions on \(q,t\). Let 
\begin{equation}
	T_{q,x_i} \colon \mathbb{F}[x_1,\dots,x_N] \rightarrow \mathbb{F}[x_1,\dots,x_N], \quad T_{q,x_i}(x_j)=q^{\delta_{i,j}}x_j
\end{equation}
denotes the shift operator. The \emph{Macdonald difference operator} is defined by the formula
\begin{equation}\label{eq:Macd:oper:N}
	D_{q, t}=\sum_{i=1}^{N} \Big( \prod\limits_{j=1 \atop j \neq i}^{N} \frac{t x_{i}-x_{j}}{x_{i}-x_{j}}\Big) T_{q, x_{i}}.
\end{equation}
It can be shown that \(D_{q, t}\) preserves the space of symmetric polynomials \(\Lambda_N\). The next theorem (see \cite{MacDonald1998} for the proof) defines Macdonald symmetric polynomials.

\begin{theorem}\label{Th:Macd:polyn}
	For any partition $\mu$, $\left(\ell(\mu)\leq N \right)$ there is a unique symmetric polynomial $P_{\mu}\left(q,t\right)\in \Lambda_{N,\mathbb{F}}$ such that 
     \begin{itemize}
            \item $D_{q,t}P_{\mu}\left(q,t\right)=\mathcal{E}_{\mu,N}\left(q,t\right)P_{\mu}\left(q,t\right)$, where \(\mathcal{E}_{\mu,N}\l(q,t\r)=\sum\limits_{j=1}^{N} q^{\mu_j}t^{N-j}\),
            \item               $P_{\mu}\l(q,t\r)=m_{\mu}+\sum\limits_{\nu<\mu}u_{\mu \nu }\l(q,t\r)m_{\nu}$, where $u_{\mu \nu }\l(q,t\r)\in \mathbb{F}$.
    \end{itemize}
\end{theorem}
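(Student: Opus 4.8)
The plan is to reduce the statement to a standard fact about triangular operators. First I would show that the Macdonald operator $D_{q,t}$ acts in an upper-triangular fashion on the basis of monomial symmetric polynomials ordered by dominance: for every partition $\mu$ with $\ell(\mu)\le N$,
\begin{equation}
D_{q,t}\, m_\mu = \mathcal{E}_{\mu,N}(q,t)\, m_\mu + \sum_{\nu<\mu} c_{\mu\nu}(q,t)\, m_\nu,
\end{equation}
with $c_{\mu\nu}\in\mathbb{F}$ and the diagonal coefficient equal to the claimed eigenvalue. Once this is established, Theorem \ref{Th:Macd:polyn} becomes the assertion that a triangular endomorphism of the finite-dimensional space $\Lambda_{N,\mathbb{F}}^{|\mu|}$ whose diagonal entries are pairwise distinct admits a unique monic eigenvector for each eigenvalue, which is elementary linear algebra.

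For the triangularity I would use that $D_{q,t}$ preserves $\Lambda_N$ (already noted above) together with a dominance estimate on leading monomials. Writing $D_{q,t}=\sum_{i=1}^N A_i(x)\,T_{q,x_i}$ with $A_i(x)=\prod_{j\ne i}\frac{tx_i-x_j}{x_i-x_j}$ homogeneous of degree $0$, one has $T_{q,x_i}x^\alpha = q^{\alpha_i}x^\alpha$; expanding each $A_i$ and collecting terms, every monomial appearing in $D_{q,t}\,m_\mu$ should have exponent vector whose sorted rearrangement is dominated by $\mu$. To pin down the diagonal coefficient I would extract the leading term in the regime $x_1\gg x_2\gg\cdots\gg x_N$, where $A_i\to t^{\,N-i}$, so that the coefficient of the dominant monomial $x_1^{\mu_1}\cdots x_N^{\mu_N}$ is $\sum_{i=1}^N q^{\mu_i}t^{\,N-i}=\mathcal{E}_{\mu,N}(q,t)$.

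The distinctness of the diagonal entries, which is what makes the eigenvectors unique, is in fact easy over $\mathbb{F}=\mathbb{Q}(q,t)$: in $\mathcal{E}_{\mu,N}=\sum_{j=1}^N q^{\mu_j}t^{N-j}$ the $j$-th summand is the unique term carrying the power $t^{N-j}$, so $\mathcal{E}_{\mu,N}=\mathcal{E}_{\nu,N}$ forces $q^{\mu_j}=q^{\nu_j}$ for every $j$ and hence $\mu=\nu$. Combining this with triangularity, I would build $P_\mu$ by induction on the dominance order: writing $P_\mu=m_\mu+\sum_{\nu<\mu}u_{\mu\nu}m_\nu$ and imposing $D_{q,t}P_\mu=\mathcal{E}_{\mu,N}P_\mu$ yields, for each $\nu<\mu$, a linear equation solvable for $u_{\mu\nu}$ with denominator $\mathcal{E}_{\mu,N}-\mathcal{E}_{\nu,N}\ne 0$; this simultaneously proves existence and uniqueness.

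I expect the main obstacle to be the triangularity step, and specifically verifying that after collecting terms no monomial with a sorted exponent vector incomparable to or larger than $\mu$ survives in $D_{q,t}\,m_\mu$. The eigenvalue computation and the distinctness argument are comparatively routine, so the combinatorial bookkeeping controlling which monomials are produced by the rational prefactors $A_i$ is where the real work lies.
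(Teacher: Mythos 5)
Your proposal is the standard argument from Macdonald's book (Ch.~VI, \S 4), which is precisely the reference this paper cites in lieu of giving a proof: triangularity of $D_{q,t}$ on the monomial basis with respect to dominance order, pairwise distinctness of the diagonal entries $\mathcal{E}_{\mu,N}$ over $\mathbb{Q}(q,t)$, and the resulting inductive construction of $P_\mu$ are all correct and are exactly how the cited source proceeds. The one piece you leave as a claim is the triangularity lemma itself, which you rightly identify as the crux; in Macdonald it is established via an explicit symmetrization identity for $x^\alpha\prod_{i<j}\frac{tx_i-x_j}{x_i-x_j}$, and your extraction of the diagonal coefficient in the regime $x_1\gg\cdots\gg x_N$ is a legitimate shortcut once that lemma is in place.
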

Macdonald symmetric function \(P_\mu(q,t)\in \Lambda_\mathbb{F}\) is the limit of Macdonald symmetric polynomial, when the number of variables \(N\)
goes to infinity.

\begin{lemma} \label{Maclem}
	Macdonald symmetric function $P_{\mu}\l(q,t\r)$ has a property 
    \begin{equation}
    	P_{\mu}\l(q,t\r)=P_{\mu}\l(q^{-1},t^{-1}\r).
    \end{equation}
\end{lemma}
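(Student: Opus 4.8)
The plan is to prove the symmetry $P_\mu(q,t)=P_\mu(q^{-1},t^{-1})$ by appealing to the characterization of Macdonald polynomials given in Theorem~\ref{Th:Macd:polyn}. Since $P_\mu(q^{-1},t^{-1})$ is obtained from $P_\mu(q,t)$ by the field automorphism of $\mathbb{F}=\mathbb{Q}(q,t)$ sending $q\mapsto q^{-1}$, $t\mapsto t^{-1}$, it suffices to show that $P_\mu(q^{-1},t^{-1})$ satisfies the same two defining conditions that uniquely characterize $P_\mu(q,t)$. The second condition (triangularity in the monomial basis with leading term $m_\mu$) is automatic: applying the automorphism to $P_\mu(q,t)=m_\mu+\sum_{\nu<\mu}u_{\mu\nu}(q,t)\,m_\nu$ leaves the monomial symmetric functions fixed (they have no $q,t$ dependence) and only alters the coefficients $u_{\mu\nu}$, so $P_\mu(q^{-1},t^{-1})$ again has leading term $m_\mu$ and lower terms indexed by $\nu<\mu$.

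The substance therefore lies in the first (eigenfunction) condition. First I would examine how the Macdonald operator $D_{q,t}$ from \eqref{eq:Macd:oper:N} transforms under $q\mapsto q^{-1}$, $t\mapsto t^{-1}$. The operator $D_{q^{-1},t^{-1}}$ is built from the same structure but with $T_{q^{-1},x_i}$ and with factors $\tfrac{t^{-1}x_i-x_j}{x_i-x_j}$. The key computation is to relate $D_{q^{-1},t^{-1}}$ to $D_{q,t}$; the natural tool is to conjugate by the variable inversion $x_i\mapsto x_i^{-1}$ or to compare eigenvalues directly. Concretely, $P_\mu(q^{-1},t^{-1})$ is an eigenfunction of $D_{q^{-1},t^{-1}}$ with eigenvalue $\mathcal{E}_{\mu,N}(q^{-1},t^{-1})=\sum_{j=1}^N q^{-\mu_j}t^{-(N-j)}$. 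I would then show that the operators $D_{q,t}$ and $D_{q^{-1},t^{-1}}$ share the same joint eigenbasis (of symmetric polynomials), so that a $D_{q^{-1},t^{-1}}$-eigenfunction which is triangular with leading term $m_\mu$ must coincide with the unique $D_{q,t}$-eigenfunction having that same leading term, namely $P_\mu(q,t)$.

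The cleanest route is to invoke the fact that $D_{q,t}$ is self-adjoint with respect to the Macdonald scalar product, whose defining structure on the power-sums $p_\lambda$ is manifestly invariant under $(q,t)\mapsto(q^{-1},t^{-1})$, since $\tfrac{1-q^k}{1-t^k}=\tfrac{1-q^{-k}}{1-t^{-k}}$ for every $k$. Thus the Macdonald polynomials, being the unique orthogonal family triangular to the monomial basis, are determined by data invariant under this inversion, which forces $P_\mu(q,t)=P_\mu(q^{-1},t^{-1})$. I expect the main obstacle to be the bookkeeping in the operator computation: verifying precisely how $D_{q,t}$ behaves under the simultaneous inversion of $q,t$ and, if one goes the conjugation route, handling the variable substitution $x_i\mapsto x_i^{-1}$ (which moves one outside the polynomial ring and requires care, e.g.\ working with finitely many variables and tracking an overall monomial factor that cancels on symmetric functions). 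The orthogonality argument sidesteps this by reducing everything to the symmetry of the scalar product on power-sums, and that is the step I would rely on to make the proof short.
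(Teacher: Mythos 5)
The paper does not actually prove this lemma; it simply cites \cite[Ch.~VI (4.14) iv]{MacDonald1998}, and the argument given there is essentially the one you sketch: invariance of the defining orthogonality-plus-triangularity characterization under $(q,t)\mapsto(q^{-1},t^{-1})$. So your overall strategy is the standard and intended one. However, the identity you lean on to close the argument is false: for $k\geq 1$ one has
\begin{equation*}
\frac{1-q^{-k}}{1-t^{-k}}=\frac{t^{k}}{q^{k}}\cdot\frac{1-q^{k}}{1-t^{k}},
\end{equation*}
so the Macdonald scalar product on power sums is \emph{not} literally invariant under the inversion; rather $\langle p_\lambda,p_\lambda\rangle_{q^{-1},t^{-1}}=(t/q)^{|\lambda|}\langle p_\lambda,p_\lambda\rangle_{q,t}$. (Check $k=1$, $q=2$, $t=3$: the two sides of your claimed identity are $1/2$ and $3/4$.) As written, the step you call ``the step I would rely on to make the proof short'' does not hold.

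The gap is repairable in one line, and you should make the repair explicit: the two scalar products agree up to the \emph{global} nonzero factor $(t/q)^{n}$ on the degree-$n$ homogeneous component, and distinct homogeneous components are orthogonal for both. Since each $P_\mu$ is homogeneous of degree $|\mu|$, a family of homogeneous symmetric functions is orthogonal for $\langle\cdot,\cdot\rangle_{q,t}$ if and only if it is orthogonal for $\langle\cdot,\cdot\rangle_{q^{-1},t^{-1}}$. Combined with your (correct) observation that the monomial-triangularity with leading term $m_\mu$ is preserved under the field automorphism $q\mapsto q^{-1}$, $t\mapsto t^{-1}$, the uniqueness in the orthogonality characterization then yields $P_\mu(q,t)=P_\mu(q^{-1},t^{-1})$. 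Your alternative operator route (comparing $D_{q,t}$ with $D_{q^{-1},t^{-1}}$) can also be made to work but requires the bookkeeping you anticipate; if you take the scalar-product route, just state the homogeneity argument instead of the false identity.
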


See \cite[Ch. VI (4.14) iv]{MacDonald1998} for the proof of the lemma.

\begin{define}
	\emph{Uglov symmetric function} \(P_\mu^{(\gamma,l)} \in \Lambda_N\) is the limit 
	\begin{equation}\label{eq:Uglov-def}
		P^{(\gamma,l)}_{\mu}=\lim\limits_{q\rightarrow 1}P_{\mu}\l(\omega_l q, \omega_l q^{\gamma}\r),
	\end{equation}
    where $l\in \mathbb{Z}_{>0}$, $\gamma\in \mathbb{C}$ and $\omega_l=\exp\left(2\pi \ri/l\right)$.
\end{define}
These functions were introduced in \cite{Uglov:1998} under the name \(\mathfrak{gl}_l\)-Jack polynomials. For \(l=1\) they are usual Jack polynomials. In the paper, we use only Uglov polynomials for \(l=2\). 

It is not difficult to see that the limit \eqref{eq:Uglov-def} exists, see e.g. \cite[App. B.2]{BBT2013}.

\section{NSR algebra}
\label{sec3}

In this section, we recall the definition \(\mathcal{N}=1\) super analog of the Virasoro algebra and its Verma modules. Another name of this algebra is Neveu–Schwarz-Ramond algebra, or just NSR algebra for brevity. 

\begin{define}
\emph{NSR algebra} is a Lie super algebra with generators $L_n$,  $G_k$ and with central generator $c$ subject of the following relations:
\begin{subequations}
    \begin{align}
	    &\left[L_{m}, L_{n}\right]=(m-n)L_{m+n} + \frac{c}{12}\left(m^{3}-m\right) \delta_{m+n, 0},  \label{NSR1}\\
	    &\left[L_{n},G_{k}\right] = \left(\frac{n}{2}-k\right) G_{n+k} , \label{NSR2}\\
	    &\left\{G_{k}, G_{l}\right\} = 2L_{k+l} + \frac{c}{3}\left(k^{2}-\frac{1}{4}\right) \delta_{k+l,0} \label{NSR3},
    \end{align}
\end{subequations}
where $m,n \in \mathbb{Z} $ and  $k,l \in \mathbb{Z}+\delta $. The case of $\delta=1/2 $ is called Neveu-Schwarz (NS) sector of the NSR algebra, and the case of $\delta =0 $ is called Ramond (R) sector of the NSR algebra.
\end{define}

In all representations considered in the paper, \(c\) acts as a complex number, which we denote by the same letter \(c \in \mathbb{C}\). This number is called central charge. 

\begin{lemma}
    In the highest weight representations of central charge \(c\) the relations of the NSR algebra can be written in form of the operator product expansions as
	\begin{subequations}
	    \begin{align}
	    &L(z) L(w)=\frac{c}{2(z-w)^{4}}+\frac{2 L(w)}{(z-w)^{2}}+\frac{L^{\prime}(w)}{z-w}+\text{reg}
	     ,\\
	    &L(z) G(w)=\frac{3 G(w)}{2(z-w)^{2}}+\frac{G^{\prime}(w)}{z-w}+\text{reg} ,\\
	    &G(z) G(w)=\frac{2 c}{3(z-w)^{3}}+\frac{2 L(w)}{z-w}+\text{reg}.
	    \end{align}
	\end{subequations}
	where $L(z)=\sum\limits_{n\in \mathbb{Z}}L_nz^{-n-2}$, $G(z)=\sum\limits_{k\in \mathbb{Z + \delta}}G_k z^{-k-3/2}$ and ``$\text{reg}$'' stands for terms regular at $z\rightarrow w$.
\end{lemma}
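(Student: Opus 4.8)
My plan is to read the three OPEs as statements about the singular parts of the radially ordered products $L(z)L(w)$, $L(z)G(w)$ and $G(z)G(w)$, and to verify them directly through the formal-distribution calculus. The dictionary I would invoke is that, for two mutually local fields, the (super)commutator is a finite sum
\[
[A(z),B(w)]_{\pm}=\sum_{j\geq 0}C_j(w)\,\frac{1}{j!}\,\p_w^{\,j}\delta(z,w),
\]
where $\delta(z,w)=\sum_{n\in\mathbb Z}w^n z^{-n-1}$ is the formal delta function, and that this is equivalent to the OPE $A(z)B(w)\sim\sum_{j\geq 0}C_j(w)/(z-w)^{j+1}$, the pole of order $j+1$ being the expansion of $\tfrac1{j!}\p_w^{\,j}\delta(z,w)$ for $|z|>|w|$. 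Since the OPE records only the singular part, which for local fields is fixed by the commutator, it suffices to compute each (anti)commutator of currents from the mode relations, recognize it in the displayed form, and read off the $C_j(w)$.

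First I would substitute the mode expansions into $[L(z),L(w)]=\sum_{m,n}[L_m,L_n]\,z^{-m-2}w^{-n-2}$ and split the result, using \eqref{NSR1}, into a central piece and an operator piece. For the central piece the key is the factorization $m^3-m=(m-1)m(m+1)$: after the shift $n=m+1$ this is exactly the coefficient of $\p_w^{3}\delta(z,w)$, so
\[
\frac{c}{12}\sum_{m}(m^3-m)\,z^{-m-2}w^{m-2}=\frac{c}{12}\,\p_w^{3}\delta(z,w),
\]
contributing $C_3=\tfrac{c}{12}\cdot 3!=\tfrac c2$ and hence the pole $\tfrac{c/2}{(z-w)^4}$. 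For the operator piece $\sum(m-n)L_{m+n}z^{-m-2}w^{-n-2}$ I would reorganize the sum so that, using $L'(w)=\sum_n(-n-2)L_n w^{-n-3}$, it collapses into $2L(w)\,\p_w\delta(z,w)+L'(w)\,\delta(z,w)$, producing the $2L(w)/(z-w)^2+L'(w)/(z-w)$ terms.

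The remaining two OPEs follow the same template. For $L(z)G(w)$ I would use \eqref{NSR2}; the relation $[L_n,G_k]=(\tfrac n2-k)G_{n+k}$ is the statement that $G$ is primary of weight $\tfrac32$, and resummation gives $\tfrac32 G(w)/(z-w)^2+G'(w)/(z-w)$. For $\{G(z),G(w)\}$ the odd statistics replace the commutator by the anticommutator, and \eqref{NSR3} supplies a central term governed by $k^2-\tfrac14=(k-\tfrac12)(k+\tfrac12)$; after the mode shift this becomes the coefficient of $\p_w^{2}\delta(z,w)$, giving $C_2=\tfrac c3\cdot 2!=\tfrac{2c}{3}$ and the pole $\tfrac{2c/3}{(z-w)^3}$, while the $2L_{k+l}$ term yields $2L(w)/(z-w)$.

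I expect the main obstacle to be the Ramond sector, where $k,l\in\mathbb Z$ and the current $G(z)$ carries half-integer powers $z^{-k-3/2}$, hence is genuinely double-valued. There the naive formal delta function must be replaced by its counterpart on the double cover (carrying $\sqrt{z/w}$ prefactors), and one must check that the branch structure of $G(z)G(w)$ is consistent with the integer-order poles written in the statement and does not alter the coefficients $2c/3$ and $2$. The parallel bookkeeping in the Neveu-Schwarz sector, where $k\in\mathbb Z+\tfrac12$ and the powers of $G(z)$ are in fact integral, is comparatively routine; the only care needed there is tracking the half-integer index shifts so that the polynomial factor $k^2-\tfrac14$ lands on the correct derivative of $\delta(z,w)$.
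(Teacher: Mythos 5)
Your proposal is correct: the paper states this lemma without proof, treating it as the standard mode--OPE dictionary, and your formal-distribution computation (central terms from the factorizations $m^3-m=(m-1)m(m+1)$ and $k^2-\tfrac14=(k-\tfrac12)(k+\tfrac12)$ matching $\partial_w^3\delta$ and $\partial_w^2\delta$, operator terms resumming to $2L(w)\partial_w\delta+L'(w)\delta$, etc.) is exactly the argument being implicitly invoked, with the coefficients $c/2$, $2c/3$, $3/2$ all checking out. Your flag about the Ramond double cover is apt but harmless here: the extra $\sqrt{z/w}$ factors equal $1$ at $z=w$ and do not change the displayed singular coefficients, which is why the paper can state a single set of OPEs for both sectors.
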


It is convenient to parametrize the central charge as
\begin{equation}\label{eq:cNSR}
    c=\frac{3}{2}-12\rho^2, \quad \rho=\frac{1}{2}\left(\beta - \frac{1}{\beta}\right) .
\end{equation}

\begin{define}[NS]
	Verma module of NSR algebra in NS sector $M\left(c,\Delta\right)$ is freely generated by \(L_{-n},G_{-k}\),  \(\forall n,k>0\) from the highest weight vector $\ket{\Delta}$ defined by 
\begin{equation}
        L_n\ket{\Delta}=0,\ (\forall n\in \mathbb{Z}_{>0}) \quad L_0\ket{\Delta}=\Delta\ket{\Delta} \quad G_k\ket{\Delta}=0,\ (\forall k\in \mathbb{Z}_{\geq 0}+\frac{1}{2}).
    \end{equation}
\end{define}
\begin{define}[R]
\label{VermaR}
Verma module of NSR algebra in R sector $M\left(c,\lambda\right)$ is freely generated by \(L_{-n},G_{-k}\), \(\forall n,k>0\) from the highest weight vector  \(\ket{\Delta}=\ket{\Delta,\lambda}\) defined by 
\begin{subequations}
	\begin{align} \label{RVermaa}
        L_n\ket{\Delta}&=0,\ (\forall n\in \mathbb{Z}_{>0}) \quad L_0\ket{\Delta}=\Delta\ket{\Delta} \quad G_k\ket{\Delta}=0,\ (\forall k\in \mathbb{Z}_{>0}),\\
		\label{RVermaa:G0}
    	G_0\ket{\Delta}&=\lambda\ket{\Delta}.
	\end{align}    
\end{subequations}
Here $\Delta=\lambda^2+c/24$.
\end{define}

\begin{remark}\label{Rem:M:tilde}
	There is another definition of the Verma module in the R sector (see e.g. \cite{Iohara:20031}). In this definition the module is generated by the highest weight vector subject of relations \eqref{RVermaa} (without \eqref{RVermaa:G0}). We denote this module by \(\widetilde{M}(c,\Delta)\). It has two highest weight vectors: even $\ket{\Delta^+}$ and odd $\ket{\Delta^-}$ such that
	\begin{subequations}
		\begin{align}
		    L_n\ket{\Delta^{\pm}}&=0, \;(\forall n\in \mathbb{Z}_{>0}), \quad L_0\ket{\Delta^{\pm}}=\Delta\ket{\Delta^{\pm}}, \quad G_k\ket{\Delta^{\pm}}=0, \; \left(k>0, k\in \mathbb{Z}\right),
		    \\
		    G_0\ket{\Delta^+}&=\ket{\Delta^-}, \quad G_0\ket{\Delta^-}=(\Delta-c/24)\ket{\Delta^+}.
		\end{align}
	\end{subequations}
		Assume that \(\Delta \neq c/24\) and pick \(\lambda=\sqrt{\Delta-c/24}\). Then one can define vectors \(\ket{\Delta,\lambda}\) and \(\ket{\Delta,-\lambda}\) by the formulas
	\begin{equation} \label{altR}
	    \ket{\Delta^+}=\frac{\ket{\Delta,\lambda}-\ket{\Delta,-\lambda}}{2\lambda},\quad \ket{\Delta^-}=\frac{\ket{\Delta,\lambda}+\ket{\Delta,-\lambda}}{2}.
	\end{equation}
	The vectors \(\ket{\Delta,\lambda}\) and \(\ket{\Delta,-\lambda}\) satisfy conditions \eqref{RVermaa}-\eqref{RVermaa:G0} and generate modules \(M(c,\lambda)\) and \(M(c,-\lambda)\). Hence, we get (under assumption \(\Delta \neq c/24\)) \(\widetilde{M}(c,\Delta)=M(c,\lambda)\oplus M(c,-\lambda)\).
\end{remark}
\begin{define}
	The vector \(\chi\) in the Verma module is called singular if 
	\begin{equation}\label{eq:sing}
	    L_m\ket{\chi}=G_k\ket{\chi}=0,\ \forall k>0, m>0.
	\end{equation}
\end{define}

Without loss of generality, one can assume that the singular vector is an eigenvector of \(L_0\), namely 
\begin{equation}\label{eq:L0:chi}
	L_0\chi = (\Delta+n)\chi, \quad n \in \mathbb{Z}\cup \mathbb{Z}+\delta
\end{equation}
The value \(n\) is called the level of the singular vector.

The Verma module is an irreducible representation for generic values of the highest weight. However, for special values of the highest weight and generic central charge Verma module is a reducible representation and it has only one singular vector. These special values are given by the analog of Kac-Feigin-Fuchs theorem for NSR algebra (see e.g. \cite[App. 1]{Kac:1986}).

\begin{theorem}
	The Verma module has a singular vector if 
	\begin{equation}\label{eq:Deltars}
		\Delta=\Delta_{r,s}=\frac{1}{8}\left(r\beta-s\beta^{-1}\right)^2-\frac{\rho^2}{2}+\frac{1-2\delta}{16},
	\end{equation}
	and 
	\begin{equation}\label{lamrs}
		\lambda=\lambda_{r,s}=\varepsilon \frac{r\beta-s\beta^{-1}}{2\sqrt{2}},
	\end{equation}
	for R sector, where $\varepsilon$ is a sign. Here $r,s\in \mathbb{Z}_{>0}$ such that $\left(r-s\right) \operatorname{mod} 2 =1-2\delta$ and level of the singular vector is \(n= r s/2\).
%
\end{theorem}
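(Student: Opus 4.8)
The plan is to prove the existence half of the statement through the degeneration of the contravariant (Shapovalov) form, following the classical Kac--Feigin--Fuchs strategy. First I would equip the Verma module with the contravariant bilinear form $\langle\,\cdot\,,\,\cdot\,\rangle$ fixed by $\langle\Delta|\Delta\rangle=1$ and the adjoints $L_n^{\dagger}=L_{-n}$, $G_k^{\dagger}=G_{-k}$. This form is compatible with the $L_0$-grading \eqref{eq:L0:chi}, so it restricts to a finite-dimensional Gram matrix on each level-$n$ subspace. The structural point is that the radical of the form is a submodule and that the homogeneous vectors of minimal degree in any nonzero submodule automatically satisfy \eqref{eq:sing}, i.e.\ are singular. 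Thus exhibiting a singular vector of level $n=rs/2$ reduces to showing that the level-$n$ Gram matrix degenerates exactly there.

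Next I would pass to the momentum parametrization coming from the bosonization of Section~\ref{sec4}. For a suitable quadratic parametrization $\Delta=\Delta(P)=\tfrac12 P^2-\tfrac{\rho^2}{2}+\tfrac{1-2\delta}{16}$, with $c$ as in \eqref{eq:cNSR}, one checks directly that $\Delta_{r,s}$ of \eqref{eq:Deltars} equals $\Delta(P_{r,s})$ for the resonant momentum $P_{r,s}=\tfrac12(r\beta-s\beta^{-1})$, and that the reflection $P\mapsto -P$ fixes the weight. The engine of the proof is then the sector-dependent Kac determinant formula, which writes the level-$n$ determinant as a nonzero constant times $\prod(\Delta-\Delta_{r,s})^{P_\delta(n-rs/2)}$, the product over $r,s\in\mathbb{Z}_{>0}$ with $rs/2\le n$ and $(r-s)\equiv 1-2\delta\pmod 2$. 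Granting this, at $\Delta=\Delta_{r,s}$ with admissible parity the factor $(\Delta-\Delta_{r,s})$ first vanishes at level $rs/2$, and for generic $\beta$ no other factor vanishes at or below that level; hence the radical is nonzero and its bottom graded piece is the desired singular vector of level $rs/2$.

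The Ramond value of $\lambda$ is then immediate: by Definition~\ref{VermaR} one has $\Delta=\lambda^2+c/24$, and substituting $c/24=\tfrac1{16}-\tfrac{\rho^2}{2}$ from \eqref{eq:cNSR} into $\Delta_{r,s}$ with $\delta=0$ gives $\lambda^2=\tfrac18(r\beta-s\beta^{-1})^2$, so that $\lambda=\varepsilon\,\tfrac{r\beta-s\beta^{-1}}{2\sqrt2}$ as in \eqref{lamrs}; the sign $\varepsilon$ records the two square roots, corresponding to the summands $M(c,\pm\lambda)$ of $\widetilde M(c,\Delta)$ in Remark~\ref{Rem:M:tilde}. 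The parity condition $(r-s)\equiv 1-2\delta \pmod 2$ I would read off from the $\mathbb{Z}_2$ fermion-number grading: the singular vector of level $rs/2$ lies in a definite parity class, and only one class is compatible with the modes available in the chosen sector (half-integer $G_k$ for NS, integer for R), which is precisely the restriction already built into the determinant formula.

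The main obstacle is the determinant formula itself --- fixing the correct vanishing multiplicities $P_\delta(n-rs/2)$ and the sector-dependent parity restriction --- rather than the soft structural step turning a degenerate form into a singular vector. In practice I would avoid a from-scratch Gram computation and instead use the free-field realization together with its screening currents: the intertwiners built from $r$ and $s$ screening charges map one Fock module to another and produce explicit null vectors of momentum $P_{r,s}$ at level $rs/2$. This route simultaneously proves existence, fixes the level, and reproduces both $\Delta_{r,s}$ and $\lambda_{r,s}$, and it fits the bosonized framework used throughout the paper (cf.\ \cite{Kac:1986}, \cite{Iohara:20031}).
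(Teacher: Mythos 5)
Your outline matches what the paper actually does with this statement: the paper does not prove it but recalls it as the Kac--Feigin--Fuchs theorem for the NSR algebra with a citation to \cite{Kac:1986}, and then obtains \eqref{lamrs} exactly as you do, from $\Delta=\lambda^2+c/24$ together with $c/24=\tfrac1{16}-\tfrac{\rho^2}{2}$, the sign $\varepsilon$ labelling the two summands $M(c,\pm\lambda)$ of $\widetilde M(c,\Delta)$. Your Shapovalov-form-plus-determinant scheme (and the screening-charge alternative for existence) is the standard proof behind that citation, so there is no divergence of method; the only caveat is that your argument, like the paper's, ultimately rests on the determinant formula with multiplicities $P_\delta(n-rs/2)$ being taken as known rather than established, so it is an outline of the classical proof rather than a self-contained one. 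One small point worth tightening if you did write it out: for the module $M(c,\lambda)$ of Definition~\ref{VermaR} (as opposed to $\widetilde M(c,\Delta)$) the Ramond determinant factors through $\lambda$ rather than $\Delta$, i.e.\ $\Delta-\Delta_{r,s}=(\lambda-\lambda_{r,s})(\lambda+\lambda_{r,s})$ and only one of the two linear factors occurs in a given $M(c,\pm\lambda)$; this is harmless for the statement as formulated, since both signs of $\varepsilon$ are allowed, but it is the place where the sign bookkeeping of Proposition~\ref{sign} originates.
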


The eigenvalue of $G_0$ in the formula \eqref{lamrs} for R sector is derived from the value of \(\Delta\) in \eqref{eq:Deltars} using commutation relation \eqref{NSR3}. 

\begin{fact}\label{sign}
	In the R sector the singular vector \(\chi\) satisfies 
	\begin{equation}\label{eq:G0:chi}
		G_0 \chi = \tilde{\varepsilon}\frac{r\beta+s\beta^{-1}}{2\sqrt{2}}\chi
	\end{equation}
	where $\tilde{\varepsilon}=(-1)^s\varepsilon$.
%
%
%
	\end{fact}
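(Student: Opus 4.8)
The plan is to show first that $G_0$ acts on $\chi$ by a scalar, then to compute the square of that scalar from the defining relations, and finally to pin down its sign.

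First I would check that $G_0\chi$ is again a singular vector of the same level. Since $[L_0,G_0]=0$ by \eqref{NSR2}, the vector $G_0\chi$ lies in the same $L_0$-eigenspace as $\chi$. For $m>0$ the relation \eqref{NSR2} gives $L_m G_0\chi = G_0 L_m\chi + [L_m,G_0]\chi = \tfrac{m}{2}G_m\chi = 0$, and for $k>0$ the relation \eqref{NSR3} gives $G_k G_0\chi = -G_0 G_k\chi + \{G_k,G_0\}\chi = 2L_k\chi = 0$, where I used $L_m\chi = G_k\chi = 0$ from \eqref{eq:sing}. Hence $G_0\chi$ is singular of the same level $n=rs/2$. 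Because the Verma module carries a one–dimensional space of singular vectors at this level (by the theorem preceding the statement), I conclude $G_0\chi = \mu_0\chi$ for some scalar $\mu_0$.

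Next I would compute $\mu_0^2$. Setting $k=l=0$ in \eqref{NSR3} yields $G_0^2 = L_0 - c/24$. Applying this to $\chi$ and using $L_0\chi = (\Delta+n)\chi$ together with $\Delta = \lambda^2 + c/24$ gives $\mu_0^2 = \lambda^2 + n$. Substituting $\lambda=\lambda_{r,s}$ from \eqref{lamrs} and $n=rs/2$, the elementary identity $(r\beta - s\beta^{-1})^2 + 4rs = (r\beta + s\beta^{-1})^2$ gives $\mu_0^2 = \tfrac{1}{8}(r\beta + s\beta^{-1})^2$, so that $\mu_0 = \pm\tfrac{r\beta + s\beta^{-1}}{2\sqrt{2}}$. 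This fixes the magnitude stated in \eqref{eq:G0:chi}, and the whole content of the proposition is the determination of the remaining sign.

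The hard part will be to identify this sign as $\tilde\varepsilon=(-1)^s\varepsilon$, since at this level of generality the relative sign between the $G_0$-eigenvalue on $\ket{\Delta}$ and on $\chi$ cannot be read off from the abstract relations alone: it requires the explicit form of $\chi$. I would fix it using the free–field realization (to be set up in Section~\ref{sec4}), in which $\chi$ is realized as a Ramond Fock vacuum of shifted momentum dressed by $s$ fermionic screening charges. Because $G_0$ is odd and anticommutes with each screening charge up to exact terms (the screening property), moving it past the $s$ fermionic screenings produces the factor $(-1)^s$, while its action on the shifted vacuum reproduces the sign $\varepsilon$ of the original highest–weight vector times the magnitude $\tfrac{r\beta+s\beta^{-1}}{2\sqrt 2}$. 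The main obstacle is the careful bookkeeping of the fermion zero mode and of the two–dimensional Ramond vacuum, so that no spurious sign is gained or lost. As a consistency check I would verify the formula directly at the lowest level $n=1$, where $\chi = a\,L_{-1}\ket{\Delta}+b\,G_{-1}\ket{\Delta}$ and the ratio $a/b$ determined by $L_1\chi=0$ already yields $\mu_0=(-1)^s\varepsilon\,\tfrac{r\beta+s\beta^{-1}}{2\sqrt 2}$ for $(r,s)=(2,1)$ and $(1,2)$; I would also note that the relation $\tilde\varepsilon(s,r)=-\tilde\varepsilon(r,s)$ forced by the symmetry $\beta\mapsto-\beta^{-1}$, $r\leftrightarrow s$ (which uses $r-s$ odd) is compatible with this answer.
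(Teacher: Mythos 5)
Your reduction of the statement to a sign is correct and complete: showing that $G_0\chi$ is again singular of the same level (hence proportional to $\chi$ by uniqueness of the singular vector at generic central charge), and computing $\mu_0^2=\lambda^2+n=\tfrac{1}{8}(r\beta+s\beta^{-1})^2$ from $G_0^2=L_0-c/24$, is exactly what the paper means by ``up to sign the formula \eqref{eq:G0:chi} follows from \eqref{eq:L0:chi}''. But the entire content of Proposition \ref{sign} is the factor $(-1)^s$, and there your argument has a genuine gap. The screening-charge picture you invoke is not constructed anywhere in the paper, and the two claims it rests on --- that the Ramond singular vector is a shifted Fock vacuum dressed by exactly $s$ fermionic screening charges, and that $G_0$ anticommutes with each such screening up to exact terms with no contribution from the fermion zero mode --- are precisely the hard bookkeeping you defer. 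As written, the sign is asserted rather than derived; the level-one check for $(r,s)=(2,1),(1,2)$ and the $\beta\mapsto-\beta^{-1}$, $r\leftrightarrow s$ symmetry are consistency tests, not a proof for general $(r,s)$.

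For comparison, the paper does not prove the sign at this point either: it cites the literature and then \emph{reproduces} the sign in Section \ref{sec6} by a different mechanism. There, after Theorem \ref{Th:sing:NSR} identifies $\chi_{r,s}$ with the Uglov polynomial $P^{1/\beta^2,2}_{(r^s)}$, the first order in $\hbar$ of the Macdonald eigenvalue equation \eqref{eq:Dqt:Macdonald} is evaluated on $P^{1/\beta^2,2}_{(r^s)}$: the operator side reduces to $-\varepsilon 2\sqrt{2}\beta^{-1}G_0$ plus known scalars (formula \eqref{RC1}), while the eigenvalue $\mathcal{E}^1_{(r^s)}$ in \eqref{eq:E1} is an explicit alternating sum over $i=1,\dots,s$ whose dependence on the parity of $s$ produces exactly the factor $(-1)^s$ in \eqref{eq:G0:Uglov}. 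If you want to keep your route you must actually construct the screened vector, prove it equals $\chi_{r,s}$, and track the zero mode of the Ramond fermion; otherwise the sign should be either cited or obtained by the eigenvalue comparison of Section \ref{sec6}.
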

	Up to sign the formula \eqref{eq:G0:chi} follows from
	\eqref{eq:L0:chi}. For the sign see \cite[eq. 2.10]{Watts_1993} which is based on \cite{Friedan:1988}. We will reproduce this sign below.

\section{Bosonization} \label{sec4}

\begin{define}\label{def:HeisCliff:algebra}
	Heisenberg-Clifford algebra is an algebra with generators $a_n$, $\forall n\in \mathbb{Z}$ and $f_k$ $\forall k\in\mathbb{Z}+\delta$ and relations
	\begin{equation}
	    \l[a_n,a_m\r]=n\delta_{m+n,0},\quad \l\{f_k,f_l\r\}=\delta_{k+l,0}, \quad \l[a_n,f_k\r]=0.
	\end{equation}
\end{define}

Recall that \(\delta=1/2\) corresponds to Neveu-Schwarz sector and \(\delta=0\) corresponds to Ramond sector. 

\begin{lemma} 
	The relations of the Heisenberg-Clifford algebra can be written in the form of the operator product expansions, 
    \begin{equation} \label{af}
        a(z)a(w)=\frac{1}{(z-w)^2}+\text{reg}, \quad f(z)f(w)=\frac{1}{z-w}+\text{reg},
    \end{equation}
     where $a(z)=\sum\limits_{n\in\mathbb{Z}}a_n z^{-n-1}$ is a bosonic field and  $f(z)=\sum\limits_{k\in\mathbb{Z}+\delta} f_k z^{-k-1/2}$ is a fermionic field.
\end{lemma}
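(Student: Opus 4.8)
The statement is an equivalence between the mode (anti)commutators of Definition~\ref{def:HeisCliff:algebra} and the singular parts of the operator products \eqref{af}, so the plan is to pass back and forth through the standard dictionary relating the two. For a pair of free fields the radially ordered product splits as a normal-ordered piece, which is regular as $z\to w$ and hence belongs to ``$\text{reg}$'', plus a $c$-number contraction that carries all the singular behaviour and coincides with the right-hand side of \eqref{af}. Accordingly I would first substitute the mode expansions, then reorder creation and annihilation modes using the given relations to isolate the contraction, and finally sum the resulting geometric series in the region $|z|>|w|$. The converse implication—that \eqref{af} forces the relations—would be obtained by extracting modes through the contour integrals $a_n=\oint\frac{\mathrm{d}z}{2\pi\ri}\,z^{n}a(z)$ and $f_k=\oint\frac{\mathrm{d}z}{2\pi\ri}\,z^{k-1/2}f(z)$, whereby the (anti)commutator of modes appears as the discontinuity of the product across $|z|=|w|$.

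For the bosonic field, inserting $a(z)=\sum_n a_n z^{-n-1}$ and moving the annihilation modes ($a_n$, $n>0$) to the right with $\l[a_n,a_m\r]=n\delta_{n+m,0}$ leaves, for $|z|>|w|$, the contraction $\sum_{n>0}n\,z^{-n-1}w^{n-1}$. Using $\sum_{n\ge1}n\,x^{n-1}=(1-x)^{-2}$ with $x=w/z$ this sums to $z^{-2}(1-w/z)^{-2}=(z-w)^{-2}$, which is precisely the claimed pole. The fermionic computation is identical in spirit: with $f(z)=\sum_k f_k z^{-k-1/2}$ and $\l\{f_k,f_l\r\}=\delta_{k+l,0}$, the contraction for $|z|>|w|$ is $\sum_{k>0}z^{-k-1/2}w^{k-1/2}$; writing $k=j+\tfrac12$ with $j\ge0$ in the NS sector this becomes $z^{-1}\sum_{j\ge0}(w/z)^{j}=(z-w)^{-1}$, matching \eqref{af}. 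In both cases the normal-ordered remainder is a genuine operator, regular at coincident points, so it contributes only to ``$\text{reg}$''.

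The main subtlety, and the place where care is needed, is the Ramond sector of the fermion. There $k\in\mathbb{Z}$, so $f(z)$ carries half-integer powers of $z$ and is double valued, and the zero mode $f_0$ with $\l\{f_0,f_0\r\}=1$ does not fall cleanly on either side of the split into annihilation and creation modes. One must fix a convention for the zero mode in the normal ordering and track the $\tfrac12(\sqrt{z/w}+\sqrt{w/z})$-type prefactor it produces, so that \eqref{af} is understood as the leading short-distance singularity. I would also keep explicit track of the fermionic signs generated in the radial ordering, since it is the anticommutator rather than the commutator that must reappear upon contour extraction; this is the only step where the bosonic and fermionic arguments genuinely differ.
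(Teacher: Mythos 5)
Your proposal is correct: the paper states this lemma as a standard fact and gives no proof, and your mode-expansion computation (normal ordering, summing the geometric series for $|z|>|w|$, and extracting modes by contour integrals for the converse) is exactly the standard argument one would supply. Your treatment of the Ramond zero mode is also right --- the $f_0^2=\tfrac12$ contribution combines with the sum over $k>0$ to give $\tfrac{1}{2}\bigl(\sqrt{z/w}+\sqrt{w/z}\bigr)\tfrac{1}{z-w}$, whose difference from $\tfrac{1}{z-w}$ is regular at $z=w$, so \eqref{af} holds as stated in both sectors.
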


\begin{define}[NS] \label{def:Fock:NS}
	Fock module of Heisenberg-Clifford algebra in NS sector $F_{\alpha}$, \(\alpha \in \mathbb{C}\) is freely generated by \(a_{-n},f_{-k}\), \(\forall n,k>0\) from the highest weight vector $\ket{\alpha}$ defined by 
    \begin{equation}
        a_n\ket{\alpha}=0\ (\forall n>0), \quad a_0\ket{\alpha}=\alpha\ket{\alpha}, \quad f_k\ket{\alpha}=0\ (\forall k>0, k\in \mathbb{Z}+1/2).
    \end{equation}
\end{define}
\begin{define}[R] \label{def:Fock:R}
	Fock module of Heisenberg-Clifford algebra in R sector $F_{\alpha}=F_{\alpha,\varepsilon }$, \(\alpha \in \mathbb{C}, \varepsilon=\pm 1\) is freely generated by \(a_{-n},f_{-k}\), \(\forall n,k>0\) from the highest weight vector $\ket{\alpha}=\ket{\alpha,\varepsilon}$ defined by 
\begin{subequations}
    \begin{equation}
        a_n\ket{\alpha}=0\ (\forall n\in \mathbb{Z}_{>0}), \quad a_0\ket{\alpha}=\alpha\ket{\alpha}, \quad f_k\ket{\alpha}=0\ (\forall k \in \mathbb{Z}_{>0}),
    \end{equation}
    \begin{equation}
    f_0\ket{\alpha}=\frac{\varepsilon}{\sqrt{2}}\ket{\alpha}.
    \end{equation}
    \end{subequations}
\end{define}
The following lemma is a standard bosonization of the NSR algebra.    
\begin{lemma} Formulas 
\label{vir}
    \begin{equation} 
    \begin{aligned}
    L(z):=\sum_{n \in Z} L_{n} z^{-n-2} \quad & \longmapsto \frac{1}{2} : a(z)^{2} :+\rho\partial_z a(z)+\frac{1}{2} :\!\!\partial_z f(z) f(z)\!\!:, \\
    G(z):=\sum_{k \in Z+\delta} G_{k} z^{-k-3 / 2} & \longmapsto f(z) a(z)+2 \rho\, \partial_z f(z)
    \end{aligned}
    \end{equation}
    define the action of the NSR algebra on Fock module $F_{\alpha}$. 
\end{lemma}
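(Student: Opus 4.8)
The plan is to verify directly that the three defining operator product expansions of the NSR algebra hold for the proposed free-field currents; once the OPEs are checked, the mode expansions $L(z)=\sum L_n z^{-n-2}$ and $G(z)=\sum G_k z^{-k-3/2}$ act by well-defined operators on $F_\alpha$ and automatically satisfy \eqref{NSR1}--\eqref{NSR3}. Every computation reduces to Wick's theorem applied to the elementary contractions \eqref{af}, together with the fact that $a$ and $f$ have no mutual contraction, so $a(z)f(w)$ is regular and the bosonic and fermionic sectors decouple up to regular terms.

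First I would split $L(z)=L_B(z)+L_F(z)$ with $L_B=\frac{1}{2}:\!a^2\!:+\rho\,\partial a$ and $L_F=\frac{1}{2}:\!\partial f\, f\!:$, so that $L(z)L(w)=L_B(z)L_B(w)+L_F(z)L_F(w)+\text{reg}$. The double contraction of $\frac{1}{2}:\!a^2\!:$ with itself gives the boson anomaly $\tfrac{1}{2}(z-w)^{-4}$, and the double contraction $\rho^2\,\partial_z\partial_w\langle a(z)a(w)\rangle=-6\rho^2(z-w)^{-4}$ of the background-charge term supplies the correction, so that $L_B$ carries central charge $1-12\rho^2$; the free Majorana fermion $L_F$ contributes $\tfrac{1}{2}$. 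The sum $1-12\rho^2+\tfrac12=\tfrac32-12\rho^2=c$ by \eqref{eq:cNSR}, while the single contractions reassemble into $\frac{2L(w)}{(z-w)^2}+\frac{L'(w)}{z-w}$, reproducing the $L(z)L(w)$ OPE.

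Next, for $L(z)G(w)$ with $G=fa+2\rho\,\partial f$, I would contract the boson in $L$ against the $a$ in $G$ and the fermion in $L$ against the $f$ and $\partial f$ in $G$. The composite $fa$ has weight $\tfrac12+1=\tfrac32$ and produces the expected $\frac{(3/2)\,fa}{(z-w)^2}$ plus a simple pole. Finally, for $G(z)G(w)$ I would organize the contractions of $(fa+2\rho\,\partial f)(z)$ with $(fa+2\rho\,\partial f)(w)$ by order of singularity: the double contraction of $f(z)a(z)$ with $f(w)a(w)$ gives $(z-w)^{-1}(z-w)^{-2}=(z-w)^{-3}$, and $4\rho^2\,\partial_z\partial_w\langle f(z)f(w)\rangle=-8\rho^2(z-w)^{-3}$, so the leading term is $(1-8\rho^2)(z-w)^{-3}=\frac{2c}{3}(z-w)^{-3}$, again matching \eqref{eq:cNSR}; the single contractions together with the mixed $\partial f$--$a$ cross terms yield, after Taylor-expanding at $w$, the simple pole $\frac{2L(w)}{z-w}$, and one verifies that all $(z-w)^{-2}$ terms cancel as the anticommutator structure of $\{G_k,G_l\}$ requires.

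The hard part will be the $L(z)G(w)$ computation: $\partial f$ is on its own only a descendant, and the background-charge piece $\rho\,\partial a$ in $L$ generates extra non-covariant terms. Careful bookkeeping of fermionic signs and of the Taylor expansion separating singular from regular parts is needed to see that the coefficient $2\rho$ in $G$ is fixed precisely so these terms assemble into $\frac{(3/2)(2\rho\,\partial f)}{(z-w)^2}+\frac{\partial(2\rho\,\partial f)}{z-w}$, making the full current $G$ exactly primary of weight $\tfrac32$ and yielding $\frac{(3/2)G(w)}{(z-w)^2}+\frac{G'(w)}{z-w}$.
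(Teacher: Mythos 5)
Your proposal is correct, and its numerology checks out: the double contractions give $c_B=1-12\rho^2$ for the boson with background charge, $c_F=\tfrac12$ for the Majorana fermion, and $1-8\rho^2=\tfrac{2c}{3}$ for the leading term of $G(z)G(w)$, all consistent with \eqref{eq:cNSR}; likewise the third-order pole $-2\rho f(w)(z-w)^{-3}$ coming from $\rho\,\partial a(z)\cdot a(w)f(w)$ is cancelled precisely by $+2\rho f(w)(z-w)^{-3}$ from $\tfrac12:\!\partial f f\!:(z)\cdot 2\rho\,\partial f(w)$, which is the point where the coefficient $2\rho$ is fixed. Note, however, that the paper offers no proof of this lemma at all --- it is introduced as ``a standard bosonization of the NSR algebra'' --- so there is no in-paper argument to compare against; your Wick-theorem verification is exactly the standard one that would be supplied. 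The only caveat worth adding is that the OPEs are sector-independent formal-distribution identities, so to conclude that they ``define the action on $F_\alpha$'' in the Ramond sector one should also check the zero-mode bookkeeping (the normal ordering involving $f_0$ with $f_0^2=\tfrac12$, which produces the $c/24$ shift implicit in Definition \ref{VermaR}); this is routine but is not literally contained in the OPE computation.
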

The map defines representations homomorphism $M\left(c,\Delta\right) \rightarrow F_{\alpha}$ in NS sector
\begin{equation}
    \ket{\Delta}\mapsto \ket{\alpha}, \quad \Delta = \frac{1}{2}\left(\alpha^2-2\rho\alpha\right)+\frac{1-2\delta}{16}, 
\end{equation}
and $M\left(c,\lambda\right)\rightarrow F_{\alpha,\varepsilon}$ in R sector
\begin{equation}
    \ket{\Delta, \lambda}\mapsto \ket{\alpha,\varepsilon}, \quad \lambda=\varepsilon\frac{\alpha-\rho}{\sqrt{2}}.
\end{equation}
Special values of the highest weight vector $\Delta=\Delta_{r,s}$ correspond to the special values of the moments  $\alpha=\alpha_{r,s}$
\begin{equation}\label{eq:alpha:rs}
    \alpha_{r,s}=\frac{1}{2}\left(1+r\right)\beta-\frac{1}{2}\left(1+s\right)\beta^{-1}.
\end{equation}

\subsection{Odd bosonization}

In order to state our result, it is convenient to bosonize fermion current \(f(z)\) in terms of additional boson. This boson appears to be odd. Such formulas can be viewed as a version of boson-fermion correspondence. See for example \cite[Sec. 2.3]{Itoyama:2013} or \cite[App. B]{Bershtein2018}. 

It will be also necessary to extend Heisenberg-Clifford algebra by an additional fermion. 
Similarly to Definition \ref{def:HeisCliff:algebra} one can define Heisenberg-Clifford-Clifford algebra with generators \(a_n,f_k,f_r\), with \(n,k\in \mathbb{Z}, r \in \mathbb{Z}+1/2\). This algebra has two fermionic currents, which we denote by $f^{NS}(z)=\sum\limits_{r\in\mathbb{Z}+1/2} f_r z^{-r-1/2}$ and $f^{R}(z)=\sum\limits_{k\in\mathbb{Z}} f_k z^{-k-1/2}$.
\begin{lemma} \label{iso}
Formulas 
    \begin{equation}   \label{isofns}
		f^{NS}(z^2) \longmapsto \frac{\ri}{2\sqrt{2} z}\left(e^{\varphi_{-}(z)} e^{ \varphi_{+}(z)}-e^{-\varphi_{-}(z)} e^{-\varphi_{+}(z)}\right),
	\end{equation}
	\begin{equation} \label{isofr}
		f^{R} (z^2) \longmapsto \frac{\varepsilon}{2\sqrt{2} z}\left(e^{\varphi_{-}(z)} e^{ \varphi_{+}(z)}+e^{-\varphi_{-}(z)} e^{- \varphi_{+}(z)}\right),
	\end{equation}
	where
	\begin{equation}\label{phi+-}
		\varphi_{-}(z)=-\sum_{n \in \mathbb{Z}_{>} 0} 	\frac{p_{2 n-1}}{2 n-1} z^{2n-1},
		\quad \quad \varphi_{+}(z)=\sum_{n \in \mathbb{Z}_{>0}} 2\frac{\partial}{\partial p_{2 n-1}} z^{-2n+1},
	\end{equation}
	and
	\begin{equation}
		\label{isoan}
	    a_n \mapsto -2\beta n\frac{\partial}{\partial 	p_{2n}}, \quad \quad a_{-n} \mapsto -\frac{1}{2\beta}p_{2n},\quad a_0 \mapsto \alpha  
	\end{equation}
	define the action of Heisenberg-Clifford-Clifford algebra on the space of symmetric functions~$\Lambda$.
\end{lemma}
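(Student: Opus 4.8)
The plan is to verify that the operators in \eqref{isofns}–\eqref{isoan} obey the defining relations of the Heisenberg–Clifford–Clifford algebra, equivalently that they reproduce the operator product expansions \eqref{af} for each of the two fermionic currents and that the two currents are mutually local with no singular part. The structural observation that organizes everything is that the bosonic modes \eqref{isoan} are built only from the \emph{even} power sums $p_{2n}$ and the derivatives $\partial/\partial p_{2n}$, while the vertex operators entering $f^{NS}$ and $f^{R}$ are built, through $\varphi_\pm$ in \eqref{phi+-}, only from the \emph{odd} power sums $p_{2n-1}$ and $\partial/\partial p_{2n-1}$. As even and odd power sums are independent coordinates on $\Lambda$, every bosonic generator commutes with every fermionic one, so $[a_n,f_k]=0$ holds automatically; moreover the Heisenberg relation is immediate, since \eqref{isoan} gives $[-2\beta n\,\partial/\partial p_{2n},\,-\tfrac{1}{2\beta}p_{2n}]=n$ and all other brackets vanish.

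The core of the argument is one normal-ordering computation. Set $V_{\pm}(z)=e^{\pm\varphi_-(z)}e^{\pm\varphi_+(z)}$, so that $f^{NS}(z^2)=\tfrac{\ri}{2\sqrt2\,z}\bigl(V_+(z)-V_-(z)\bigr)$ and $f^{R}(z^2)=\tfrac{\varepsilon}{2\sqrt2\,z}\bigl(V_+(z)+V_-(z)\bigr)$. From \eqref{phi+-} and the identity $\sum_{n>0}x^{2n-1}/(2n-1)=\tfrac12\log\frac{1+x}{1-x}$ one finds the $c$-number commutator $[\varphi_+(z),\varphi_-(w)]=-\log\frac{z+w}{z-w}$, and hence, by Baker–Campbell–Hausdorff,
\[ V_a(z)\,V_b(w)=\Bigl(\frac{z-w}{z+w}\Bigr)^{ab}\,{:}V_a(z)V_b(w){:}\,,\qquad a,b\in\{+,-\}, \]
where ${:}\,{:}$ moves all $\varphi_-$ to the left of all $\varphi_+$. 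Using $\varphi_\pm(-w)=-\varphi_\pm(w)$ one checks that the normal-ordered product ${:}V_a(z)V_b(w){:}$ reduces to $1$ exactly at the point where its scalar prefactor is singular: at $z=w$ when $ab=-1$, and at $z=-w$ when $ab=+1$.

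Assembling these pieces produces the three OPEs. With $Z=z^2$, $W=w^2$ and $Z-W=(z-w)(z+w)$, the singular part of $f^{NS}(Z)f^{NS}(W)$ receives contributions from the pole of $\frac{z+w}{z-w}$ at $z=w$ and from the pole of $\frac{z-w}{z+w}$ at $z=-w$; in each case the relevant combination of normal-ordered products tends to $2$, and a short residue computation shows the two contributions add up to $\frac{1}{Z-W}$, so $f^{NS}(Z)f^{NS}(W)=\frac{1}{Z-W}+\mathrm{reg}$, i.e. $\{f_r,f_s\}=\delta_{r+s,0}$ for the half-integer modes. The identical computation with the symmetric combination gives $f^{R}(Z)f^{R}(W)=\frac{1}{Z-W}+\mathrm{reg}$ for the integer modes. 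For the mixed product $f^{NS}(Z)f^{R}(W)$ the surviving normal-ordered combinations are the antisymmetric ones, ${:}V_+V_-{:}-{:}V_-V_+{:}$ near $z=w$ and ${:}V_+V_+{:}-{:}V_-V_-{:}$ near $z=-w$, each of which vanishes at the would-be pole; hence the product is regular and the two fermion families anticommute.

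The only genuinely delicate point — and the place I expect to spend the most care — is the doubling induced by the argument $z^2$: every contraction is singular on the whole locus $z^2=w^2$, i.e. at both $z=w$ and $z=-w$, so both branches must be tracked at once. One must check that the \emph{spurious} poles cancel, which happens automatically because the zero of $\frac{z-w}{z+w}$ at $z=w$ (resp. of $\frac{z+w}{z-w}$ at $z=-w$) kills the unwanted contribution, while the \emph{physical} poles combine with coefficient exactly $1$. The antisymmetry of the prefactor $(\frac{z-w}{z+w})^{ab}$ under $z\leftrightarrow w$ is exactly what encodes the fermionic statistics, and the factors $\ri$, $\varepsilon$ and $1/\sqrt2$ must be bookkept carefully to match the normalization in \eqref{af}; notably, no cocycle or zero-mode subtlety of the usual boson–fermion correspondence intervenes, because $\varphi_\pm$ carry no zero mode.
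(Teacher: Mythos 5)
Your proposal follows the same route as the paper: compute the $c$-number commutator $[\varphi_+(z),\varphi_-(w)]=\ln\frac{z-w}{z+w}$, normal-order the products of the vertex operators $V_\pm(z)=e^{\pm\varphi_-(z)}e^{\pm\varphi_+(z)}$, and read off the (anti)commutation relations; the observations that the bosonic modes involve only even power sums (so $[a_n,f_k]=0$ is automatic) and that no cocycle is needed are both correct and match what the paper leaves implicit. There is, however, one sign you state incorrectly in the Ramond case. For $f^{NS}(z^2)f^{NS}(w^2)$ the residues of the poles at $z=w$ and $z=-w$ indeed both agree with those of $\frac{1}{(z-w)(z+w)}=\frac{1}{Z-W}$, but for $f^{R}(z^2)f^{R}(w^2)$ the contribution at $z=-w$ comes with the \emph{opposite} sign relative to the corresponding pole of $\frac{1}{Z-W}$: the full contraction is $\frac{z^2+w^2}{2zw\,(z^2-w^2)}=\frac{1}{2}\bigl(\sqrt{Z/W}+\sqrt{W/Z}\bigr)\frac{1}{Z-W}$, not $\frac{1}{Z-W}$ itself, so your claim that ``the two contributions add up to $\frac{1}{Z-W}$'' with coefficient $+1$ at both preimages is false there. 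This is not fatal --- the relative minus sign at $z=-w$ is exactly what is required for the \emph{integer}-moded expansion $f^{R}(z^2)=\sum_k f_k z^{-2k-1}$ to yield $\{f_k,f_l\}=\delta_{k+l,0}$, and it is precisely the distinction the paper records by writing the two anticommutators as $\frac{1}{z^2}$-type versus $\frac{1}{zw}$-type multiples of $\delta(w^2/z^2)$ --- but as literally written your ``identical computation'' for the symmetric combination would not close. With that sign tracked (most cleanly by working throughout in the variable $z$ and extracting modes there, or by using formal delta functions as the paper does), the argument is complete and equivalent to the paper's.
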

\begin{proof}
    For Heisenberg algebra, relations $\left[a_n,a_m\right]=n\delta_{n+m,0}$ are satisfied. For fermions, we have to check the defining relations \eqref{af}. It is easy to see that
    \begin{equation}
        \left[\varphi_+(z),\varphi_-(w)\right]=\ln{\frac{1-w/z}{1+w/z}}.
    \end{equation}
    Then straightforward computation gives 
	\begin{equation}
	  \left\{f^{NS}\left(z^2\right),f^{NS}\left(w^2\right)\right\}=\frac{1}{4z^2}\delta\left(\frac{w^2}{z^2}\right), \quad \left\{f^R\left(z^2\right),f^R\left(w^2\right)\right\}=\frac{1}{4zw}\delta\left(\frac{w^2}{z^2}\right),
	\end{equation}
	where \(\delta(x)=\sum_{n \in \mathbb{Z}} x^n\).
\end{proof}

Similarly to Definitions \ref{def:Fock:NS},\ref{def:Fock:R} we define Fock module over Heisenberg-Clifford-Clifford algebra. We denote this module by \(\mathrm{F}_{\alpha,\varepsilon}\).

\begin{proposition}\label{prop:Lambda:HCC}
	Lemma \ref{iso} defines isomorphism \(\Lambda \cong \mathrm{F}_{\alpha,\varepsilon}\).
\end{proposition}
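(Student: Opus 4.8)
The plan is to exhibit the constant function $1\in\Lambda$ as a highest weight vector for the action defined in Lemma \ref{iso}, thereby producing a homomorphism $\Phi\colon \mathrm{F}_{\alpha,\varepsilon}\to\Lambda$ with $\ket{\alpha,\varepsilon}\mapsto 1$, and then to show $\Phi$ is bijective by combining surjectivity with a comparison of graded dimensions. First I would verify on $1$ the defining relations of $\mathrm{F}_{\alpha,\varepsilon}$ (modeled on Definitions \ref{def:Fock:NS}, \ref{def:Fock:R}). The bosonic conditions $a_n\cdot 1=0$ for $n>0$ and $a_0\cdot 1=\alpha\,1$ are immediate from \eqref{isoan}. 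Since $\varphi_+(z)$ in \eqref{phi+-} consists of derivatives, $e^{\pm\varphi_+(z)}\cdot 1=1$, so \eqref{isofns}--\eqref{isofr} give
\begin{equation}
	f^{NS}(z^2)\cdot 1=\frac{\ri}{2\sqrt2\,z}\bigl(e^{\varphi_-(z)}-e^{-\varphi_-(z)}\bigr),\qquad f^{R}(z^2)\cdot 1=\frac{\varepsilon}{2\sqrt2\,z}\bigl(e^{\varphi_-(z)}+e^{-\varphi_-(z)}\bigr).
\end{equation}
Because $\varphi_-(z)$ is odd in $z$, the first bracket contains only odd positive powers of $z$ and the second only even non-negative powers; dividing by $z$ shows $f^{NS}(z^2)\cdot 1$ has only non-negative even powers, while $f^{R}(z^2)\cdot 1$ has only powers $z^{-1},z^{1},z^{3},\dots$ with $z^{-1}$-coefficient $\varepsilon/\sqrt2$. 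Reading off modes from $f^{NS}(z^2)=\sum_r f_r z^{-2r-1}$ and $f^{R}(z^2)=\sum_k f_k z^{-2k-1}$ yields $f_r\cdot 1=0$ for $r>0$, $f_k\cdot 1=0$ for $k>0$, and $f_0\cdot 1=\frac{\varepsilon}{\sqrt2}\,1$, which are exactly the highest weight conditions. As $\mathrm{F}_{\alpha,\varepsilon}$ is freely generated from its highest weight vector, $\Phi$ is well defined.

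Next I would establish surjectivity. Split $\Lambda=\mathbb{C}[p_2,p_4,\dots]\otimes\mathbb{C}[p_1,p_3,\dots]$; by \eqref{isoan} and \eqref{phi+-} the bosons act only on the even factor and the fermions only on the odd factor. The even factor is reached since $a_{-n}\cdot 1=-\frac{1}{2\beta}p_{2n}$, so multiplication by every even power sum lies in the image algebra. For the odd factor I would invoke the boson--fermion correspondence: the explicit vertex operators in \eqref{isofns}--\eqref{isofr} make the normally ordered fermion bilinear a bosonic current whose modes are precisely the oscillators $p_{2n-1}\cdot$ and $\partial/\partial p_{2n-1}$. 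Hence multiplication by every odd power sum also belongs to the associative algebra generated by the modes, and applying these operators to $1$ generates all of $\mathbb{C}[p_1,p_3,\dots]$. Therefore $\Phi$ is surjective.

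Finally I would prove injectivity by matching characters. Grading $\Lambda$ by $\deg p_k=k$ and $\mathrm{F}_{\alpha,\varepsilon}$ by mode degree, so that $a_{-n}$, $f^{R}_{-k}$, $f^{NS}_{-r}$ have degrees $2n$, $2k$, $2r$, the graded dimension of $\mathrm{F}_{\alpha,\varepsilon}$ equals
\begin{equation}
	\prod_{n\geq1}\frac{1}{1-u^{2n}}\,\prod_{k\geq1}\bigl(1+u^{2k}\bigr)\prod_{j\geq1}\bigl(1+u^{2j-1}\bigr),
\end{equation}
whereas that of $\Lambda$ is $\prod_{m\geq1}(1-u^m)^{-1}$. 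These coincide by Euler's identity $\prod_{m\geq1}(1+u^{m})=\prod_{j\geq1}(1-u^{2j-1})^{-1}$, which is exactly what reconciles the two fermions of the odd factor with the single odd boson. Since $\Phi$ is a grading preserving surjection between graded vector spaces whose homogeneous components have equal finite dimension, it is an isomorphism.

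The main obstacle is the surjectivity onto the odd factor $\mathbb{C}[p_1,p_3,\dots]$: this is where the boson--fermion correspondence genuinely enters, and one must check by a direct computation with the vertex operators \eqref{isofns}--\eqref{isofr} that the fermion bilinear reproduces the odd Heisenberg oscillators. Everything else — the highest weight verification and the character identity — is routine bookkeeping.
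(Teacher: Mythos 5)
Your proposal is correct in outline and hinges on the same Euler-identity character computation as the paper, but it runs the two halves of the argument in the opposite order. You establish surjectivity first, by a direct attack: checking that $1\in\Lambda$ satisfies the highest weight conditions (a verification the paper leaves implicit), generating the even power sums from $a_{-n}\cdot 1$, and invoking the boson--fermion correspondence to recover multiplication by $p_{2n-1}$ from fermion bilinears; injectivity then falls out of the equality of graded dimensions. The paper does the reverse: injectivity is immediate because $\mathrm{F}_{\alpha,\varepsilon}$ is an irreducible module (a standard fact for Fock modules of Heisenberg--Clifford type algebras), and the character identity
\begin{equation*}
	\prod_{n\geq1}\frac{1}{1-u^{2n}}\prod_{k\geq1}\bigl(1+u^{2k}\bigr)\prod_{j\geq1}\bigl(1+u^{2j-1}\bigr)=\prod_{m\geq1}\frac{1}{1-u^{m}}
\end{equation*}
then yields surjectivity for free. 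The trade-off is clear: your route buys an explicit description of how the odd power sums arise from the fermions, but at the cost of the vertex-operator computation you correctly identify as the main obstacle and do not actually carry out (one must expand $e^{\varphi_-(z)}e^{\varphi_+(z)}\cdot e^{-\varphi_-(w)}e^{-\varphi_+(w)}=\frac{z+w}{z-w}\,{:}\cdots{:}$ near $w=z$ and read off $\varphi_-'+\varphi_+'$ from the finite part); the paper's route sidesteps that computation entirely. If you want to keep your structure but close the remaining step cheaply, note that irreducibility of $\mathrm{F}_{\alpha,\varepsilon}$ already gives injectivity, after which your own character count upgrades the injection to an isomorphism and the boson--fermion computation becomes unnecessary.
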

\begin{proof}
	The Lemma~\ref{iso} defines a map \(\mathrm{F}_{\alpha,\varepsilon}\ \rightarrow \Lambda\). Since \(\mathrm{F}_{\alpha,\varepsilon}\) is irreducible, this map is injective. The surjectivity will follow from the equality of the characters (i.e. Hilbert–Poincaré series) for both sides. 
	
	For \(\Lambda\) we have 
	\begin{equation}
		\operatorname{ch}(\Lambda)=\sum_{m \geq 0} \mathfrak{q}^m \dim \Lambda^m=\prod_{k\geq 1} \frac{1}{1-\mathfrak{q}^k}
	\end{equation}
	
	The natural grading on the Heisenberg-Clifford-Clifford algebra is defined as follows 
	\begin{equation}
		\deg f_k= -2k,\;\; \deg f_r =-2r,\;\; \deg a_m=-2m\quad k,m\in \mathbb{Z}, r\in \mathbb{Z}+\frac{1}{2}.
	\end{equation}
	
 	Assuming that \(\deg \ket{\alpha,\varepsilon}=0\) we get the character of the Fock module
 	\begin{multline}
 		\operatorname{ch}(\mathrm{F}_{\alpha,\varepsilon})=\prod_{m \in \mathbb{Z}_{>0}} \frac{1}{1-\mathfrak{q}^{2m}}\prod_{k \in \mathbb{Z}_{>0}} (1+\mathfrak{q}^{2k}) \prod_{r+1/2 \in \mathbb{Z}_{>0}} (1+\mathfrak{q}^{2r}) \\
 		= \prod_{m \in \mathbb{Z}_{>0}} \frac{1}{1-\mathfrak{q}^{2m}}\prod_{k \in \mathbb{Z}_{>0}} (1+\mathfrak{q}^{k}) 
 		=\prod_{m \in \mathbb{Z}_{>0}} \frac{1}{1-\mathfrak{q}^{2m}}\prod_{m \in \mathbb{Z}_{>0}} \frac{1}{1-\mathfrak{q}^{2m-1}}=		\operatorname{ch}(\Lambda)
 	\end{multline}
 	Here we used Euler identity (particular case of the Glaisher theorem).
\end{proof}

In this paper we consider both Neveu-Schwarz sector and Ramond sector. So, one of the fermions is an additional, namely in NS sector we denote $f^{add}=f^R$, and in R sector we denote $f^{add} = f^{NS}$.  By $F^{add}$ we denote the Fock module for additional fermion (we omit dependence on \(\varepsilon\) here). Clearly, we have \(\mathrm{F}_{\alpha,\varepsilon}=F_{\alpha}\otimes F^{add}\). 

\begin{corollary}\label{Cor:iso}
	Lemma \ref{iso} defines isomorphism \(\Lambda \cong F_{\alpha}\otimes F^{add}\).
\end{corollary}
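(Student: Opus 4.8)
The plan is to deduce the corollary directly from Proposition~\ref{prop:Lambda:HCC} combined with the factorization of the Fock module $\mathrm{F}_{\alpha,\varepsilon}$ over the Heisenberg-Clifford-Clifford algebra. Proposition~\ref{prop:Lambda:HCC} already supplies the isomorphism $\Lambda\cong\mathrm{F}_{\alpha,\varepsilon}$, so the entire remaining content is the identification $\mathrm{F}_{\alpha,\varepsilon}=F_{\alpha}\otimes F^{add}$ asserted (as ``clear'') in the paragraph preceding the statement, after which one simply composes the two maps.

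First I would record the algebraic splitting. Inspecting the defining relations of the Heisenberg-Clifford-Clifford algebra, the modes $a_n$ together with the ``main'' fermionic modes (those of $f^{NS}$ in the NS sector, and those of $f^{R}$ in the R sector) close on a copy of the Heisenberg-Clifford algebra of Definition~\ref{def:HeisCliff:algebra}, while the additional fermion $f^{add}$ generates a separate Clifford algebra. All cross-relations vanish: the bosons commute with every fermion, and the two fermionic families anticommute to zero, $\{f_k,f_r\}=0$. Hence the Heisenberg-Clifford-Clifford algebra is the graded (super) tensor product of the Heisenberg-Clifford algebra and the Clifford algebra of $f^{add}$.

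Next I would check that the highest weight data factorize compatibly with this splitting. The annihilation conditions and the eigenvalue $a_0=\alpha$ defining $\mathrm{F}_{\alpha,\varepsilon}$ involve only the first subalgebra, and the additional fermion is constrained only by its own vacuum conditions, with the sign $\varepsilon$ attached to whichever factor carries the fermionic zero mode $f_0$. Thus the cyclic vector $\ket{\alpha,\varepsilon}$ is a product of the highest weight vectors of $F_{\alpha}$ and $F^{add}$, and since the creation operators of the two commuting subalgebras act independently, the standard argument that the Fock module over a tensor product of (super)algebras is the tensor product of the respective Fock modules yields $\mathrm{F}_{\alpha,\varepsilon}=F_{\alpha}\otimes F^{add}$. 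Composing with Proposition~\ref{prop:Lambda:HCC} gives $\Lambda\cong F_{\alpha}\otimes F^{add}$. There is no genuine obstacle here; the only point needing care is the bookkeeping of which fermion plays the role of $f^{add}$ in each sector, and where the sign $\varepsilon$ is carried, both of which have already been fixed above.
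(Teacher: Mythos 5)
Your proposal is correct and matches the paper's route: the corollary is obtained by composing the isomorphism $\Lambda\cong\mathrm{F}_{\alpha,\varepsilon}$ of Proposition~\ref{prop:Lambda:HCC} with the factorization $\mathrm{F}_{\alpha,\varepsilon}=F_{\alpha}\otimes F^{add}$, which the paper asserts as ``clear'' in the paragraph preceding the statement. Your spelled-out justification of that factorization (the Heisenberg-Clifford-Clifford algebra splitting as a super tensor product, with the highest weight conditions and the sign $\varepsilon$ distributed to the correct factor in each sector) is exactly the implicit content of that remark.
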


\section{\textit{q}-Virasoro algebra}\label{sec:qVir}

In this section we define \(q\)-deformed Virasoro algebra, its Fock modules and singular vectors following \cite{SKAO}. Then we consider the limit of $Vir_{q,t}$ in the Fock module. 

\begin{define}
    The \(q\)-deformed Virasoro algebra $Vir_{q,t}$ is an associative algebra generated by $T_n$, $n\in \mathbb{Z}$ with following relations:
\begin{equation}
        \left[T_{n}, T_{m}\right]=-\sum_{l=1}^{\infty} \mathrm{f}_{l}\left(T_{n-l} T_{m+l}-T_{m-l} T_{n+l}\right)-\frac{(1-q)\left(t-1\right)}{t-q}\left(\left(\frac{q}{t}\right)^{n}-\left(\frac{q}{t}\right)^{-n}\right) \delta_{m+n, 0},
    \end{equation}
where the coefficients \(\mathrm{f}_l\)'s are given by the following function \(\mathrm{f}(z)\)
\begin{equation}
    \mathrm{f}(z)=\sum\limits_{l=0}^{\infty}\mathrm{f}_lz^l=\exp\left\{\sum\limits_{n=1}^{\infty}\frac{1}{n}\frac{\left(1-q^n\right)\left(t^n-1\right)}{\left(t^n+q^n\right)}z^n\right\}.
\end{equation}
\end{define}

The next theorem is proven in \cite[Sec. 4]{SKAO}.

\begin{theorem}\label{Th:Virqt:boson}
    The formula 
    \begin{equation}
        \begin{aligned}
        \label{stress-energy}
        &T(z)=q^{1 / 2}t^{-1/2} \exp \left\{-\sum_{n=1}^{\infty} \frac{1-t^{n}}{t^n+q^{n}}\frac{t^{n/2}}{q^{n/2}} \frac{p_n}{n} z^{n}  \right\} \exp \left\{-\sum_{n=1}^{\infty}\left(1-q^{n}\right)\frac{q^{n/2}}{t^{n/2}} \frac{\partial}{\partial p_n} z^{-n} \right\} u\\
        &+q^{-1 / 2}t^{1/2} \exp \left\{\sum_{n=1}^{\infty} \frac{1-t^{n}}{t^n+q^{n}}\frac{q^{n/2}}{t^{n/2}} \frac{p_n}{n} z^{n} \right\} \exp \left\{\sum_{n=1}^{\infty}\left(1-q^{n}\right)\frac{t^{n/2}}{q^{n/2}} \frac{\partial}{\partial p_n} z^{-n} \right\} u^{-1},
        \end{aligned}
    \end{equation}
	where \(T(z)=\sum\limits_{n\in \mathbb{Z}} T_n z^{-n}\) defines representations of \(Vir_{q,t}\) in the space \(\Lambda\).
\end{theorem}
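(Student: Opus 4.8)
The plan is to deduce the mode relation from its generating-current form. Extracting the coefficient of $z^{-n}w^{-m}$ from
\begin{equation}\label{eq:qVir:current}
	\mathrm{f}\l(\tfrac{w}{z}\r)T(z)T(w)-\mathrm{f}\l(\tfrac{z}{w}\r)T(w)T(z)=-\frac{(1-q)(t-1)}{t-q}\l(\delta\l(\tfrac{q}{t}\tfrac{w}{z}\r)-\delta\l(\tfrac{t}{q}\tfrac{w}{z}\r)\r),
\end{equation}
where $\delta(x)=\sum_{k\in\mathbb{Z}}x^k$, reproduces the defining relation of $Vir_{q,t}$ together with its central term: the $\mathrm{f}_0=1$ part of the left-hand side gives $[T_n,T_m]$, the $l\ge 1$ part gives the $\mathrm{f}_l$-sum, and the two delta functions give $((q/t)^n-(q/t)^{-n})\delta_{m+n,0}$. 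So it suffices to prove \eqref{eq:qVir:current} for the bosonic current. Write $T(z)=\Lambda^{+}(z)+\Lambda^{-}(z)$ for the two summands in \eqref{stress-energy}; each $\Lambda^{\pm}(z)$ is a product of an exponential in the $p_n$, an exponential in the $\partial/\partial p_n$, and an invertible zero-mode factor $u^{\pm1}$ commuting with the oscillators.

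First I would record the pairwise normal orderings, using $e^{A}e^{B}=e^{B}e^{A}e^{[A,B]}$ with the only nonzero bracket coming from $[\partial/\partial p_n,p_n]=1$. The point of the definition of $\mathrm{f}$ is that it is exactly the inverse of the diagonal self-contraction, so that
\begin{equation}
	\mathrm{f}\l(\tfrac{w}{z}\r)\Lambda^{\pm}(z)\Lambda^{\pm}(w)={:}\Lambda^{\pm}(z)\Lambda^{\pm}(w){:}
\end{equation}
is symmetric in $z,w$; hence the $\Lambda^{+}\Lambda^{+}$ and $\Lambda^{-}\Lambda^{-}$ terms cancel in the antisymmetrized left-hand side of \eqref{eq:qVir:current} and only the mixed terms survive.

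The heart of the computation is the mixed terms. Evaluating the two contractions and multiplying by $\mathrm{f}$ gives, in the region $|w|<|z|$,
\begin{equation}
	\mathrm{f}\l(\tfrac{w}{z}\r)\Lambda^{+}(z)\Lambda^{-}(w)=R\l(\tfrac{w}{z}\r){:}\Lambda^{+}(z)\Lambda^{-}(w){:},\qquad R(x)=\frac{(1-t^{-1}x)(1-qx)}{(1-x)(1-(q/t)x)},
\end{equation}
while $\mathrm{f}(z/w)\Lambda^{-}(w)\Lambda^{+}(z)$ produces the \emph{same} rational function $R(w/z)$, now expanded in $|z|<|w|$, times the same normal-ordered symbol. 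Their difference is therefore concentrated on the poles of $R$ and is a combination of formal delta functions, via $\tfrac{1}{x-a}\big|_{<}-\tfrac{1}{x-a}\big|_{>}=-a^{-1}\delta(x/a)$. The unwanted pole at $x=w/z=1$ is killed because its residue is cancelled by the pole at $x=1$ coming from the opposite ordering pair $\l(\Lambda^{-}(z)\Lambda^{+}(w),\,\Lambda^{+}(w)\Lambda^{-}(z)\r)$, whose structure function is $R$ with $q$ and $t$ interchanged. At the genuine poles $w/z=t/q$ and $w/z=q/t$ the combined creation and annihilation coefficients of the normal-ordered product vanish identically, so $:\!\Lambda^{+}(z)\Lambda^{-}(w)\!:$ collapses to the constant $1$; computing the two residues then gives exactly the right-hand side of \eqref{eq:qVir:current}.

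The last step is to read off the modes, recovering the defining commutator. I expect the main obstacle to be the bookkeeping in the mixed-term step: one must check that the two orderings are the two expansions of one and the same $R$, that the spurious pole at $z=w$ cancels between the two ordering pairs, and that the combined oscillator coefficients vanish at $w/z=q/t,\,t/q$ so that the delta-function coefficients are genuine scalars and not operators. Each is a short verification, but all three rely on the precise form of the coefficients in \eqref{stress-energy} and of $\mathrm{f}$, which is where the content of the theorem sits.
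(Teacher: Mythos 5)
Your proposal is correct and is essentially the standard argument: the paper does not prove this theorem itself but cites \cite{SKAO}, Sec.~4, and the proof there is exactly the free-field computation you outline (antisymmetrized current relation, cancellation of the diagonal self-contractions by $\mathrm{f}$, the mixed terms producing one rational structure function $R$ expanded in two regions, cancellation of the spurious pole at $z=w$ between the two ordering pairs, and collapse of the normal-ordered product to $1$ at $w/z=q/t,\,t/q$). All three verifications you flag as the crux do go through with the coefficients as written in \eqref{stress-energy}, so no gap remains.
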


The representation constructed in Theorem \ref{Th:Virqt:boson} depends on \(q,t\) which are parameters of \(Vir_{q,t}\) and also on \(u\). The next theorem is proven in \cite[Sec. 5]{SKAO}.

\begin{theorem}   \label{Macrs}
    For $u=u_{r,s}=t^{\frac{1}{2}\left(1+s\right)}q^{-\frac{1}{2}\left(1+r\right)}$, $r,s\in \mathbb{Z}_{>0}$ the module constructed in Theorem~\ref{Th:Virqt:boson} has singular vector $\chi_{r,s}=P_{\left(r^s\right)}\left(q,t\right)$
    \begin{equation}\label{eq:Tn:Macdonald}
        T_n P_{\left(r^s\right)}\left(q,t\right)=0,\quad n>0.
    \end{equation}
\end{theorem}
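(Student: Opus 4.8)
The plan is to read \eqref{eq:Tn:Macdonald} as a \emph{highest-weight condition} for $Vir_{q,t}$ on $\Lambda$, produce the singular vector by a screening construction, and then identify it with $P_{(r^s)}(q,t)$ via uniqueness together with triangularity. First I would record the grading implicit in \eqref{stress-energy}: the factors carrying $p_n z^n$ raise the degree in $\Lambda$ while the factors carrying $\partial_{p_n}z^{-n}$ lower it, so in $T(z)=\sum_n T_n z^{-n}$ the mode $T_n$ shifts degree by $-n$. Hence for $n>0$ the operator $T_n$ strictly lowers degree, $T_n P_{(r^s)}$ is homogeneous of degree $rs-n$, and in particular it vanishes automatically for $n>rs$. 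Thus the claim is exactly that $P_{(r^s)}$ is a $Vir_{q,t}$-highest weight vector, and only the finitely many conditions $1\le n\le rs$ carry content. I would also try to reduce these further using the defining relations of $Vir_{q,t}$, in analogy with the Virasoro limit where $L_1,L_2$ generate the positive subalgebra.

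The main tool is the screening current $S(z)$ of $Vir_{q,t}$: a vertex operator in the same boson, normalized (by a direct OPE computation from \eqref{stress-energy}) so that $[T_n,S(w)]$ is a total $q$-difference $\partial_q(\cdots)$. The associated screening charge $\mathsf{Q}=\oint S$, realized as a Jackson integral, then commutes with every $T_n$ \emph{provided} the resulting boundary terms cancel. The candidate singular vector is built by applying screening currents to the highest-weight vector $1\in\Lambda$,
\begin{equation}
    \chi_{r,s}=\oint\cdots\oint \Big(\prod_{i} S(w_i)\Big)\cdot 1,
\end{equation}
where the number of insertions and the contour/power prescription are dictated by $(r,s)$, and the specific value $u=u_{r,s}=t^{\frac12(1+s)}q^{-\frac12(1+r)}$ is chosen precisely so that the integrand is single-valued and the total-difference boundary terms vanish. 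Since $1$ is highest weight ($T_n\cdot 1=0$ for $n>0$ by the degree count) and $\mathsf{Q}$ commutes with $T_n$ at this value of $u$, one gets $T_n\chi_{r,s}=0$ for all $n>0$; so $\chi_{r,s}$ is automatically singular.

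It then remains to identify $\chi_{r,s}$ with $P_{(r^s)}(q,t)$. Two ingredients finish this. First, the Fock module of Theorem~\ref{Th:Virqt:boson} is a highest-weight $Vir_{q,t}$-module whose contravariant form has a $q$-Kac determinant that, at level $rs$, vanishes exactly at $u=u_{r,s}$; for generic $q,t$ the singular subspace at that level is one-dimensional, so $\chi_{r,s}$ is determined up to scale. Second, I would evaluate the screening integral, expand it in the monomial basis, and exhibit the triangular form $m_{(r^s)}+\sum_{\nu<(r^s)}(\ast)\,m_\nu$; comparing with the characterization in Theorem~\ref{Th:Macd:polyn} (or, equivalently, checking that $\chi_{r,s}$ diagonalizes the Macdonald operator $D_{q,t}$ with the eigenvalue $\mathcal{E}_{(r^s),N}$) pins it down as $P_{(r^s)}(q,t)$.

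The hard part is this last identification: extracting the monomial (or Macdonald-operator eigenvalue) data from the multiple Jackson integral of screening currents. This requires careful normal-ordering of the vertex operators, control of the resulting product of $q$-shifted factors $\prod_{i<j}(\cdots)$, and matching the outcome to a known integral representation of the rectangular Macdonald polynomial. The combinatorics of that evaluation, rather than the formal screening/highest-weight argument, is where the real work lies.
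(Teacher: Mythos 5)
The paper does not prove Theorem \ref{Macrs} at all: it is imported verbatim from \cite[Sec.~5]{SKAO}, and your outline is essentially a reconstruction of that argument (screening charges applied to the vacuum at $u=u_{r,s}$, then identification of the resulting Jackson integral with the rectangular Macdonald polynomial). So the strategy is the right one, and your degree count showing that only $1\le n\le rs$ carries content is correct. But as a proof the text has two genuine gaps, both of which you flag and neither of which is routine. First, the screening step: you must actually write down $S(w)$, verify that $[T_n,S(w)]$ is a total $q$- (or $t$-) difference, and check that the boundary terms of the Jackson integral cancel precisely at $u=u_{r,s}$ --- this is where the quantization of $u$ comes from, and it is the content of the construction, not a normalization detail. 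Second, and harder, the identification: triangularity in the monomial basis is not visible from the vertex-operator expression, and in \cite{SKAO} the identification goes through the relation between the Macdonald operator and the modes $T_n$ (cf.\ formula \eqref{op} of the present paper), which shows the screened singular vector is a $D_{q,t}$-eigenvector; for \emph{generic} $q,t$ the eigenvalues $\mathcal{E}_{\mu,N}$ separate partitions of a given size, so together with homogeneity this pins the vector down. You offer this route as an aside, but it is the one that actually closes the argument; the direct evaluation of the multiple Jackson integral is substantially more work.

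Two smaller cautions. Your hope of reducing to $n=1,2$ ``as in the Virasoro case'' does not carry over: the relation $[T_1,T_m]=-\sum_l \mathrm{f}_l\left(T_{1-l}T_{m+l}-T_{m-l}T_{1+l}\right)$ involves modes $T_{1-l}$ of both signs, so the positive modes are not generated by $T_1,T_2$ in any useful sense. And the appeal to a ``$q$-Kac determinant'' for uniqueness imports another nontrivial theorem; the uniqueness you actually need is already supplied by Theorem \ref{Th:Macd:polyn} once the eigenvalue and the degree are known.
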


Here \((r^s)\) denotes partition that consists of \(s\) parts equal to \(r\). The corresponding Young diagram is a rectangle \(r \times s\).
\subsection{Limit}
Consider a limit  \(q=-e^{\hbar}\), \(t=-e^{\gamma \hbar}\), \(\hbar\rightarrow 0\) of the current \(T(z)\) given by the formula~\eqref{stress-energy}.
\begin{equation}
   q^{-1/2}t^{1/2}u^{-1} T(z)=T^0(z)+T^1(z)\hbar+O\l(\hbar^2\r).
\end{equation}
We will study two cases of the limit behavior of \(u\). We set \(\gamma=\beta^{-2}\), after the limit the parameter \(\beta\) will parameterize central charge of the NSR algebra via the formulas \eqref{eq:cNSR}. The following results is similar to \cite[Sec. 2.4]{Itoyama:2013}.
\begin{theorem}[NS] \label{T0T1NS}
    Let $u^2\longrightarrow 1-2\hbar\beta^{-1}\alpha +O\left(\hbar^2\right)$ Then the operators $T^0(z)$ and $T^1(z)$ are expressed through Ramond fermion $f^{add}(z)=f^R(z)$ and NSR current $G(z)$
    \begin{subequations}
	    \begin{align}
    	    \label{T0NS}
        	T^0(z)&=\varepsilon 2\sqrt{2}z f^{add}\left(z^2\right),
	        \\
    	    \label{T1NS}
        	T^1(z)&=-\ri2\sqrt{2}\beta^{-1}z^3G\l(z^2\r)-\varepsilon \sqrt{2}z^2\partial_z f^{add}\l(z^2\r)-\varepsilon \sqrt{2}f^{add}\l(z^2\r)\l(2-2{\alpha}\beta^{-1}-\beta^{-2}\r).
	    \end{align}
	\end{subequations}
\end{theorem}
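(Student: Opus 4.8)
The plan is to expand the bosonic current \eqref{stress-energy} in powers of $\hbar$ after the substitution $q=-e^{\hbar}$, $t=-e^{\gamma\hbar}$ with $\gamma=\beta^{-2}$, and then to identify the coefficients $T^0,T^1$ through the odd bosonization of Lemma~\ref{iso}. The structural fact that organises the whole computation is the behaviour of the sign $(-1)^n$ hidden in $q^n=(-1)^n e^{n\hbar}$, $t^n=(-1)^n e^{n\gamma\hbar}$: the prefactor $\tfrac{1-t^n}{t^n+q^n}$ tends to $-1$ for odd $n$ but is $O(\hbar)$ for even $n$, whereas $\tfrac{t^{n/2}}{q^{n/2}}=e^{n(\gamma-1)\hbar/2}$ and its inverse are smooth. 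Consequently the odd modes survive already at order $\hbar^0$ and assemble into the fields $\varphi_{\pm}$ of \eqref{phi+-} (which contain only odd powers of $z$), while the even modes enter only at order $\hbar^1$ and, through \eqref{isoan}, build the Heisenberg field $a$ evaluated at $z^2$.

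First I would normalise. Since $q^{1/2}t^{-1/2}=e^{(1-\gamma)\hbar/2}$ is free of any branch ambiguity, the prefactor $q^{-1/2}t^{1/2}u^{-1}$ turns the first exponential summand of \eqref{stress-energy} into $e^{A_-}e^{A_+}$ with no residual scalar, and leaves the second summand multiplied by $q^{-1}t\,u^{-2}=1+(\beta^{-2}-1+2\alpha\beta^{-1})\hbar+O(\hbar^2)$, using $u^2=1-2\hbar\beta^{-1}\alpha+\dots$. At order $\hbar^0$ the exponents reduce to $A_\pm^{(0)}=-\varphi_\pm$ and $B_\pm^{(0)}=+\varphi_\pm$, so that the zeroth order of $q^{-1/2}t^{1/2}u^{-1}T(z)$ equals $e^{-\varphi_-}e^{-\varphi_+}+e^{\varphi_-}e^{\varphi_+}$, which by \eqref{isofr} is exactly $\varepsilon 2\sqrt{2}\,z\,f^{R}(z^2)$; this gives \eqref{T0NS}.

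For $T^1$ I would write $A_\pm=A_\pm^{(0)}+\hbar A_\pm^{(1)}+\dots$ and $B_\pm=B_\pm^{(0)}+\hbar B_\pm^{(1)}+\dots$. Because the two pieces of each exponent are of the same type (the minus exponents are multiplication operators in the $p_n$, the plus exponents are differential operators in the $\partial/\partial p_n$), they commute and $e^{A_\pm}=e^{A_\pm^{(0)}}(1+\hbar A_\pm^{(1)})+O(\hbar^2)$, so the first order of each summand takes the form $X^{(1)}_{-}E+E\,X^{(1)}_{+}$, with $E\in\{E_-=e^{-\varphi_-}e^{-\varphi_+},\,E_+=e^{\varphi_-}e^{\varphi_+}\}$ the corresponding zeroth order and $X\in\{A,B\}$. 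Splitting $X^{(1)}_\pm$ into odd and even parts, the odd parts are proportional to $z\varphi'_\pm$ and, via $\partial_z E_\pm=\pm(\varphi'_-E_\pm+E_\pm\varphi'_+)$, collapse into total $z$-derivatives $\partial_z E_\pm$; the even parts equal $\pm\beta^{-1}z^2$ times the creation and annihilation halves of $a(z^2)$ and, after restoring the zero mode $\alpha$, reconstruct the normal ordered product ${:}a(z^2)(E_+-E_-){:}$. Setting $S=E_++E_-$ and $D=E_+-E_-$, so that $S=\varepsilon 2\sqrt2\,z f^{R}(z^2)$ and $D=-\ri 2\sqrt2\,z f^{NS}(z^2)$ by \eqref{isofr}, \eqref{isofns}, and using ${:}a(w)f^{NS}(w){:}+2\rho\,\partial_w f^{NS}(w)=G(w)$ from Lemma~\ref{vir} (at $w=z^2$), these rearrange into $-\ri 2\sqrt2\beta^{-1}z^3 G(z^2)$ plus fermionic ($f^{R}$) contributions, which I would collect together with the $q^{-1}t\,u^{-2}$ correction to obtain \eqref{T1NS}.

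The main obstacle is the first-order bookkeeping: every ingredient contributes at order $\hbar$ — the ratio $\tfrac{1-t^n}{t^n+q^n}$ (through its first order for odd $n$ and its leading order for even $n$), the factors $e^{\pm n(\gamma-1)\hbar/2}$, the scalar $q^{-1}t\,u^{-2}$, and the Heisenberg zero mode $\alpha$ — and all of these must be assembled coherently while respecting the non-commutativity of the creation and annihilation exponentials. The decisive check is that the spurious $f^{NS}$ contributions cancel: the coefficient of $z^2\partial_z f^{NS}(z^2)$ vanishes precisely because $2\rho\beta^{-1}=1-\gamma$, and the coefficient of $z f^{NS}(z^2)$ vanishes because of the prescribed scaling $u^2\to 1-2\hbar\beta^{-1}\alpha$. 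This cancellation is exactly what pins down both the identification $\gamma=\beta^{-2}$ and the leading behaviour of $u$, and once it is used the surviving terms reproduce \eqref{T1NS}.
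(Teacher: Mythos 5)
Your proposal is correct and follows essentially the same route as the paper's proof of Theorems \ref{T0T1NS}--\ref{T0T1R}: a direct $\hbar$-expansion of the bosonized current \eqref{stress-energy}, with the odd modes of the exponents assembling into $\varphi_\pm$ (hence $f^{NS}$, $f^{R}$ via Lemma \ref{iso}) at order $\hbar^0$ and into total $z$-derivatives at order $\hbar^1$, while the even modes produce the Heisenberg field $a(z^2)$ and, together with the $u^{-2}$ correction, reconstruct $G(z^2)$. Your explicit identification of the cancellation mechanism ($2\rho\beta^{-1}=1-\gamma$ and the matching of the zero mode $\alpha$ with the scaling of $u^2$) is exactly the content of the paper's ``straightforward calculation.''
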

\begin{theorem}[R] \label{T0T1R}
	Let  $u^2 \rightarrow -\left(1-2\hbar\beta^{-1}\alpha +O\left(\hbar^2\right)\right)$ Then the operators $T^0(z)$ and $T^1(z)$ are expressed through Neveu-Schwarz fermion $f^{add}(z)=f^{NS}(z)$ and NSR current $G(z)$
	\begin{subequations}
		\begin{align}
	        \label{T0R}
	        T^0(z)&=\ri 2\sqrt{2}z f^{add}\left(z^2\right),
	        \\
	        \label{T1R}
	        T^1(z)&=-\varepsilon 2\sqrt{2}\beta^{-1} z^3G\l(z^2\r)-\ri \sqrt{2}z^2\partial_z f^{add}\l(z^2\r)-\ri \sqrt{2}f^{add}\l(z^2\r)\l(2-2\alpha \beta^{-1}-\beta^{-2}\r).
    \end{align}
	\end{subequations}
\end{theorem}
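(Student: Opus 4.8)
The plan is to prove both Theorems \ref{T0T1NS} and \ref{T0T1R} by directly substituting $q=-e^{\hbar}$, $t=-e^{\gamma\hbar}$ with $\gamma=\beta^{-2}$ into the explicit bosonized current \eqref{stress-energy} and expanding each factor to first order in $\hbar$. The structural fact that makes everything work is the separation of even and odd modes. Since $q^{n}=(-1)^{n}e^{n\hbar}$ and $t^{n}=(-1)^{n}e^{\gamma n\hbar}$, the denominators $t^{n}+q^{n}=(-1)^{n}(e^{\gamma n\hbar}+e^{n\hbar})$ stay away from zero, and a short computation shows that the coefficient $\tfrac{1-t^{n}}{t^{n}+q^{n}}$ tends to $-1$ for odd $n$ but is of order $\hbar$ for even $n$; likewise $1-q^{n}$ is $O(1)$ for odd $n$ and $O(\hbar)$ for even $n$. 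The half-integer powers recombine cleanly, $t^{n/2}/q^{n/2}=e^{(\gamma-1)n\hbar/2}$ (the branch factors $\ri^{n}$ cancel), and the scalar prefactors $q^{\pm1/2}t^{\mp1/2}$ and $q^{-1}t$ are regular at $\hbar=0$. Thus at leading order only the odd modes survive, and at order $\hbar$ the even modes enter linearly.

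For the leading term I would note that the odd-mode pieces of the creation and annihilation exponents are exactly $\mp\varphi_{-}(z)$ and $\mp\varphi_{+}(z)$ of \eqref{phi+-}, so after multiplying by $q^{-1/2}t^{1/2}u^{-1}$ the two summands of \eqref{stress-energy} tend to $e^{-\varphi_{-}(z)}e^{-\varphi_{+}(z)}$ and $e^{\varphi_{-}(z)}e^{\varphi_{+}(z)}$; the first-term prefactor reduces to $1$ because the trailing $u$ cancels the $u^{-1}$, while the second-term prefactor reduces to $q^{-1}t\,u^{-2}\to u^{-2}$. Here the two theorems diverge: the hypothesis $u^{2}\to 1-2\hbar\beta^{-1}\alpha$ gives relative sign $+$, producing the symmetric combination that is $\propto f^{R}(z^{2})$ by \eqref{isofr}, whereas $u^{2}\to-(1-2\hbar\beta^{-1}\alpha)$ flips the sign of the second exponential and produces the antisymmetric combination $\propto f^{NS}(z^{2})$ of \eqref{isofns}. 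This reproduces \eqref{T0NS} and \eqref{T0R}, the $\ri$ in the Ramond case being exactly the $\ri$ carried by $f^{NS}$. This leading computation is essentially the one already performed in the proof of Lemma~\ref{iso}.

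For $T^{1}(z)$ there are three sources of order-$\hbar$ contributions. First, the even modes enter linearly and, crucially, with opposite signs in the two summands of \eqref{stress-energy}; multiplied against the leading exponentials they therefore pick out the combination of opposite parity, i.e. the even-mode operator $a(z^{2})$ (via the dictionary \eqref{isoan}) times the fermion of the opposite type. After restoring normal order this assembles into the bilinear $f(z^{2})a(z^{2})$, which is exactly the first part of the NSR current $G(z^{2})$ of Lemma~\ref{vir}; the coefficient works out to $\propto\beta^{-1}z^{3}$ and carries an $\ri$ coming from the antisymmetric combination, matching the $-\ri 2\sqrt{2}\beta^{-1}z^{3}G(z^{2})$ term. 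Second, the $O(\hbar)$ corrections to the odd-mode coefficients, together with the explicit powers $z^{n}$, generate the derivative $\partial_{z}f^{add}(z^{2})$ and, through the $2\rho\,\partial f$ part of $G$, the remaining derivative contributions. Third, the $O(\hbar)$ parts of the scalar prefactors $q^{\pm1/2}t^{\mp1/2}$ and of $u^{\pm2}$ (the latter supplying the $\alpha$-dependence) yield the constant multiple of $f^{add}(z^{2})$. Collecting the three contributions gives \eqref{T1NS} and \eqref{T1R}.

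I expect the main obstacle to be the order-$\hbar$ bookkeeping rather than any conceptual difficulty. Three points need care: the normal-ordering commutators generated when the creation part of $a(z^{2})$ is moved past the annihilation part of the fermion exponentials while reassembling $f(z^{2})a(z^{2})$; the correct pairing of the various derivative terms (including the chain-rule factors relating $\partial_{z}$ to differentiation in the argument $z^{2}$); and the branch choice for $q^{1/2}$, $t^{1/2}$, which are imaginary since $q,t\to-1$, so that all spurious factors of $\ri$ either cancel or combine into the single $\ri$ of $f^{NS}$. The Neveu-Schwarz and Ramond cases run in complete parallel and differ only through the sign in $u^{2}$, which toggles between the symmetric and antisymmetric fermion bilinears of \eqref{isofns}--\eqref{isofr}.
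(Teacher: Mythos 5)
Your plan is correct and follows essentially the same route as the paper's proof, which likewise expands the bosonized current \eqref{stress-energy} directly in $\hbar$: at order zero it obtains $e^{-\varphi_-(z)}e^{-\varphi_+(z)}\pm e^{\varphi_-(z)}e^{\varphi_+(z)}$ (the sign of $u^{-2}$ selecting $f^{R}$ vs.\ $f^{NS}$), and at order one exactly your three contributions --- the linearly entering even modes $-z^2\tfrac{\gamma}{2}a_+(z)-2z^2a_-(z)$ multiplying the opposite-parity fermion combination, the $z\partial_z$ terms from the odd-mode corrections, and the scalar $\bigl(1-2\alpha\beta^{-1}-\gamma\bigr)$ from the prefactors and $u^{\pm2}$. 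One small simplification: the normal-ordering commutators you flag when assembling $f(z^2)a(z^2)$ do not arise, since the even bosonic modes commute with the odd modes building $\varphi_\pm$.
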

Note that in the formula \eqref{T1NS} we used bosonization of \(G\left(z\right) \) through fermion \(f^{NS}\left(z\right)\) and on the contrary in the formula \eqref{T1R} bosonization of \(G\left(z\right) \) is through fermion \(f^R\left(z\right)\). The parameter \(\alpha\) after the limit will coincide with zero mode \(a_0\) on the Fock module \(\mathrm{F}_{\alpha,\varepsilon}\) see Proposition~\ref{prop:Lambda:HCC}.
\begin{proof}
   Let us fix $u^2 \longrightarrow \pm\left(1-2\hbar\beta^{-1}\alpha +O\left(\hbar^2\right)\right)$ for $\hbar\rightarrow 0$, where $+$ for NS sector and $-$ for R sector. Here and below upper sign corresponds to NS sector and lower sing to R sector. In zeroth order of the \(\hbar\) expansion of a $q^{-1/2}t^{1/2} u^{-1} T\l(z\r)$ we get 
   \begin{equation}
       T^0\left(z\right)=\exp \left(-\varphi_-(z)\right) \exp \left(-\varphi_+(z)\right)\pm 
    \exp \left(\varphi_-(z)\right) \exp \left(\varphi_+(z)\right).
   \end{equation}
   Due to bosonization formulas \eqref{isofns}, \eqref{isofr} we get  \eqref{T0NS}  and \eqref{T0R}. In the first order of the \(\hbar\) expansion we have 
    \begin{multline}
     T^1(z)= \frac{\gamma}{2}z\partial_z\l[ \exp \left(-\varphi_-(z)\right) \exp \left(- \varphi_+(z)\right)\mp \exp \left(\varphi_-(z)\right) \exp \left( \varphi_+(z)\right)\r]+\\
   +\l[ \exp \left(-\varphi_-(z)\right) \exp \left(- \varphi_+(z)\right)\mp \exp \left(\varphi_-(z)\right) \exp \left(\varphi_+(z)\right)\r]\l(-z^2\frac{\gamma}{2}a_+(z)-2z^2 a_-(z)\r)-\\
   -z\partial_z\l[\exp \left(-\varphi_-(z)\right) \exp \left(-\varphi_+(z)\right)\r]\mp \exp \left(\varphi_-(z)\right) \exp \left(\varphi_+(z)\right)\l(1-2{\alpha}\beta^{-1}-\gamma\r).
    \end{multline}
    Here \(a_+\left(z\right)=\sum\limits_{n>0} - p_{2n}z^{2n-2}\), \(a_-\left(z\right)=\sum\limits_{n>0}-n\frac{\partial}{\partial p_{2n}}z^{-2n-2}\) and bosonic field is  \(a\left(z^2\right)=2\beta a_-\left(z\right)+\frac{1}{2\beta}a_+\left(z\right)+\alpha z^{-2}\). Substituting $\gamma=1/\beta^2$ and using odd bosonization \ref{iso} we get
    \begin{multline}
     T^1(z)= \frac{\beta^{-2}}{2}z\partial_z\l[ \ri 2 \sqrt{2}zf^{NS}\left(z^2\right)\r]
   -\l[\ri 2 \sqrt{2}zf^{NS}\left(z^2\right) \r]\beta^{-1}z^2\l(a\l(z^2\r)-\alpha z^{-2}\r)-\\
   -z\partial_z\l[\ri  \sqrt{2}zf^{NS}\left(z^2\right)+\varepsilon  \sqrt{2}z f^R\left(z^2\right)\r]- \l(\varepsilon  \sqrt{2} z f^R\l(z^2\r)-\ri  \sqrt{2}zf^{NS}\left(z^2\right)\r)\l(1-2{\alpha}\beta^{-1}-\beta^{-2}\r)
    \end{multline}
    for NS sector and 
   \begin{multline}
     T^1(z)= \frac{\beta^{-2}}{2}z\partial_z\l[ \varepsilon 2 \sqrt{2}zf^{R}\left(z^2\right)\r]
   -\l[\varepsilon 2 \sqrt{2}zf^{R}\left(z^2\right) \r]\beta^{-1}z^2\l(a\l(z^2\r)-\alpha z^{-2}\r)-\\
   -z\partial_z\l[\ri  \sqrt{2}zf^{NS}\left(z^2\right)+\varepsilon  \sqrt{2}z f^R\left(z^2\right)\r]+\l(\varepsilon  \sqrt{2} z f^R\l(z^2\r)-\ri  \sqrt{2}zf^{NS}\left(z^2\right)\r)\l(1-2{\alpha}\beta^{-1}-\beta^{-2}\r)
    \end{multline}
    for R sector. After straightforward calculation we get formulas \eqref{T1NS} and \eqref{T1R}.
    \end{proof}
	
	Now assume that \(u=u_{r,s}\) as in Theorem \ref{Macrs}. If \(r,s\) have same parity then we are under conditions of the Theorem \ref{T0T1NS}, if \(r,s\) have different parity then we are under conditions of the Theorem \ref{T0T1R}. In any case we have \(\alpha_{r,s}=\beta(r+1)/2-\beta^{-1}(s+1)/2\) in agreement with the formula \eqref{eq:alpha:rs}.

\section{Singular vectors and symmetric polynomials}
\label{sec5}
Now we formulate the main result of this paper.
\begin{theorem}\label{Th:sing:NSR}
	For \(\alpha\) given by \eqref{eq:alpha:rs}, the NSR singular vector in \(F_{\alpha}\) (or \(F_{\alpha,\epsilon}\) in R sector) maps to Uglov symmetric functions $P_{\l(r^s\r)}^{1/\beta^2,2}$ under the map in Corollary \ref{Cor:iso}.
\begin{equation}
      \chi_{r,s}\mapsto \operatorname{const} P_{\l(r^s\r)}^{1/\beta^2,2}\in F_{\alpha}\otimes \ket{1}
\end{equation}
\end{theorem}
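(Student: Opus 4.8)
The plan is to propagate the \(q\)-Virasoro singular vector through the root of unity limit and to recover the NSR highest weight conditions from the leading terms of the \(\hbar\)-expansion. By Theorem \ref{Macrs}, for \(u=u_{r,s}\) the Macdonald polynomial \(\chi_{r,s}=P_{(r^s)}(q,t)\) obeys \(T_n\chi_{r,s}=0\) for all \(n>0\). I would set \(q=-e^{\hbar}\), \(t=-e^{\hbar/\beta^2}\); by the definition of the Uglov polynomial the limit \(\chi^{(0)}:=\lim_{\hbar\to0}\chi_{r,s}=P^{(1/\beta^2,2)}_{(r^s)}\) exists and is nonzero, and \(\alpha\) is fixed to \(\alpha_{r,s}\) as in \eqref{eq:alpha:rs}. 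On the current side I would use the expansion \(q^{-1/2}t^{1/2}u^{-1}T(z)=T^0(z)+\hbar\,T^1(z)+O(\hbar^2)\) of Theorems \ref{T0T1NS}, \ref{T0T1R}, so that each mode expands as \(\hat T_n=T^0_n+\hbar\,T^1_n+O(\hbar^2)\) and the exact identity \(\hat T_n\chi_{r,s}=0\) holds for every \(\hbar\).

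The mechanism I would exploit is a parity split in \(z\). In the NS sector \(T^0(z)=\varepsilon2\sqrt2\,z\,f^{add}(z^2)\) contributes only to even modes, while in \(T^1(z)\) the current term \(z^3G(z^2)\) contributes only to odd modes (because the NS current \(G\) carries half-integer modes) and the remaining \(f^{add}\)-terms sit in the even modes; in the R sector the two parities are interchanged. Consequently no knowledge of the subleading vector is needed. For the modes on which \(T^0\) is nonzero I take \(\hbar\to0\) in \(\hat T_n\chi_{r,s}=0\) directly, obtaining \(T^0_n\chi^{(0)}=0\), i.e. \(f^{add}_m\chi^{(0)}=0\) for every annihilation mode \(m>0\); this places \(\chi^{(0)}\) in the additional-fermion highest weight subspace, which is precisely \(F_\alpha\otimes\ket{1}\).

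For the complementary parity class \(T^0_n=0\), so I would instead divide the identity by \(\hbar\) and then let \(\hbar\to0\), which gives \(T^1_n\chi^{(0)}=0\); since the only contribution of that parity in \(T^1\) is the term proportional to \(z^3G(z^2)\), this is exactly \(G_k\chi^{(0)}=0\) for all \(k>0\). Using \eqref{NSR3} to express each \(L_m\), \(m>0\), through anticommutators of positive-mode \(G\)'s (and, for the lowest \(L_m\) in the R sector, using the \(G_0\)-eigenvalue \eqref{RVermaa:G0}), I then deduce \(L_m\chi^{(0)}=0\) for \(m>0\). Thus the \(F_\alpha\)-component of \(\chi^{(0)}\) is a nonzero singular vector; it sits at level \(rs/2\) because \(\chi^{(0)}\) is homogeneous of symmetric degree \(rs\) and the bosonization \eqref{isoan} halves the degree, and its weight is \(\Delta_{r,s}\) since \(\alpha=\alpha_{r,s}\). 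By the uniqueness of the NSR singular vector at that level the map therefore sends it to \(\operatorname{const}\cdot P^{(1/\beta^2,2)}_{(r^s)}\in F_\alpha\otimes\ket{1}\). In the R sector I would run the identical scheme with \(f^{add}=f^{NS}\) and additionally compare the resulting \(G_0\)-eigenvalue with Proposition \ref{sign}.

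The step I expect to be the main obstacle is the careful treatment of the lowest modes, where the clean even/odd decoupling is most fragile. In the NS sector the relation coming from \(n=1\) could a priori couple \(G_{1/2}\chi^{(0)}\) to the additional-fermion zero mode \(f^{R}_0\chi^{(0)}\), and it is only the precise power of \(z\) multiplying \(f^{add}(z^2)\) in \eqref{T1NS} --- which pushes that contribution into the even modes --- that makes the odd relation read purely \(G_{1/2}\chi^{(0)}=0\); the analogous point in the R sector is that the \(G\)-conditions start at \(G_1\) while \(G_0\) is left unconstrained, consistently with \eqref{RVermaa:G0}. Verifying this decoupling at the bottom of the tower, and then assembling the weight, level and nonvanishing data needed to invoke uniqueness, is the delicate technical heart of the argument.
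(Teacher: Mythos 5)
Your NS-sector argument is sound but reaches the key relations $f^{add}_n\chi^{(0)}=0$ and $G_k\chi^{(0)}=0$ by a genuinely different mechanism than the paper. The paper invokes Lemma \ref{Maclem} ($P_\mu(q,t)=P_\mu(q^{-1},t^{-1})$) to show that the $O(\hbar)$ term of $P_{(r^s)}(-e^{\hbar},-e^{\gamma\hbar})$ vanishes, so the first order of $T_n\chi_{r,s}=0$ collapses to $T^1_n\chi^{(0)}=0$ for \emph{every} $n>0$; you instead use the even/odd mode decoupling, which removes the unknown cross term $T^0_n\chi^{(1)}$ only on the parity class where $T^0_n$ vanishes identically. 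Both work, since the relations from the other parity class are redundant. The ``delicate'' point you flag at the bottom of the tower is settled by a grading argument worth making explicit: $T^1_n$ lowers the degree on $\Lambda$ by $n$, while every mode of the additional fermion lowers it by an even integer, so for $n$ of the parity carrying $G$ no $f^{add}$ mode can occur in $T^1_n$. (This also forces the constant term of \eqref{T1NS}--\eqref{T1R} to be read as $z\,f^{add}(z^2)$, as in the displayed intermediate formulas in the proof of Theorems \ref{T0T1NS}--\ref{T0T1R}; with that reading your claim that the $n=1$ relation is purely $G_{1/2}\chi^{(0)}=0$ is correct.)

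The genuine gap is in the Ramond sector, at $L_1$. Anticommutators $\{G_k,G_{n-k}\}$ with both indices strictly positive produce only $L_n$ with $n\ge 2$; for $L_1$ you must use $\{G_0,G_1\}=2L_1$, whence $L_1\chi^{(0)}=\tfrac12 G_1\bigl(G_0\chi^{(0)}\bigr)$, and this vanishes only if $G_0\chi^{(0)}$ is proportional to $\chi^{(0)}$ (or at least annihilated by $G_1$). You cite \eqref{RVermaa:G0} for this, but that is the defining relation of the Verma-module highest weight vector, not a property of $P^{1/\beta^2,2}_{(r^s)}$; nor can Proposition \ref{sign} be used, since it concerns the Verma-module singular vector whose identification with $P^{1/\beta^2,2}_{(r^s)}$ is precisely what is being proved. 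The paper closes this hole in Section \ref{sec6}: it expands the bosonized Macdonald operator to first order in $\hbar$, rewrites it through the modes $T^0_n, T^1_n$ via \eqref{op}, and compares the known eigenvalue $\mathcal{E}^1_{(r^s)}$ with the operator formula \eqref{RC1} to obtain $G_0P^{1/\beta^2,2}_{(r^s)}=(-1)^s\varepsilon\,\frac{r\beta+s\beta^{-1}}{2\sqrt{2}}\,P^{1/\beta^2,2}_{(r^s)}$. Some additional input of this kind is unavoidable in your scheme as well.
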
	
\begin{remark}\label{Rem:Yanagida}
	In the NS sector this fact was stated as a conjecture in \cite{BBT2013}. The R sector is new (but rather straightforward analog of the NS sector). 
	
	In the preprint \cite{Yanagida} the proof in NS sector was suggested. The idea was similar to the idea of the proof of Theorem \ref{Macrs} in \cite{SKAO}, namely to show that singular vector are eigenfunctions of certain operators (limit of Macdonald operator). But the proof in \cite{Yanagida} contains serious gaps and we do not know how to fill them. In particular, the operators (limit of Macdonald operator) used in loc. cit. has highly degenerate spectrum, as we will see in Sec. \ref{sec6}.
	
	Here we use different (but related) approach, namely we take the limit of the Theorem \ref{Macrs} itself (not its proof).
\end{remark}

\begin{example}[NS sector] Consider  the highest weights given by formula \eqref{eq:Deltars} with \(rs\leq 3\)
	\begin{equation}
    	\Delta_{1,1}=0,\quad \Delta_{3,1}=\beta^2-\frac{1}{2},\quad \Delta_{1,3}=\frac{1}{\beta^2}-\frac{1}{2}.
	\end{equation}
	The corresponding singular vectors in Verma modules has the form
	\begin{subequations}
	\begin{align}
    	\chi_{1,1}&=G_{-1/2}\ket{\Delta_{1,1}},\\ 
	    \chi_{3,1}&=\l(L_{-1}G_{-1/2}-\beta^2G_{-3/2}\r)\ket{\Delta_{3,1}}, \\
	    \chi_{1,3}&=\l(L_{-1}G_{-1/2}-\beta^{-2}G_{-3/2}\r)\ket{\Delta_{3,1}}.
	\end{align}
	\end{subequations}
	After bosonization by Lemma \ref{vir} we get 
	\begin{subequations}
		\begin{align}
			\chi_{1,1}&\mapsto \l(\beta+\beta^{-1}\r)f_{-1/2}\ket{\alpha_{1,1}},
			\\
			\chi_{3,1}&\mapsto \l(\l(4\beta-3\beta^3-\beta^{-1}\r)f_{-3/2}+\l(3\beta^2-4+\beta^{-2}\r)a_{-1}f_{-1/2}\r)\ket{\alpha_{3,1}},
			\\
			\chi_{1,3}&\mapsto \l(\l(\beta-4\beta^{-1}+3\beta^{-2}\r)f_{-3/2}+\l(\beta^2-4+3\beta^{-1}\r)a_{-1}f_{-1/2}\r)\ket{\alpha_{1,3}},
		\end{align}
	\end{subequations}
	where 
	\begin{equation}
		\alpha_{1,1}=\beta-\frac{1}{\beta},\quad \alpha_{3,1}=2\beta-\frac{1}{\beta},\quad \alpha_{1,3}=\beta-\frac{2}{\beta}.
	\end{equation}
	Finally, we apply odd bosonization from Lemma \ref{iso}
	and get Uglov symmetric functions up to constant factor
	\begin{subequations}
		\begin{align}
	    	\chi_{1,1}&\sim p_1 \sim P_{\left(1\right)}^{1/\beta^2,2},
		    \\
	    	\chi_{3,1} &\sim \frac{2}{3}p_3 + \frac{1}{3}p_1^3+\beta^{-2}p_2p_1\sim P_{\left(3\right)}^{1/\beta^2,2}, 
	    	\\ 
	    	\chi_{1,3}&\sim 2p_3-3p_2 p_1+ p_1^3\sim P_{\left(1,1,1\right)}^{1/\beta^2,2}.
		\end{align}
	\end{subequations}
\end{example}
\begin{example}[R sector] Consider  the highest weights given by formula \eqref{lamrs} with \(rs\leq 2\)
	\begin{equation}
	   \lambda_{2,1}=\frac{\varepsilon}{2\sqrt{2}}\left(2\beta-\frac{1}{\beta}\right),\quad \lambda_{1,2}=\frac{\varepsilon}{2\sqrt{2}}\left(\beta-\frac{2}{\beta}\right).
	\end{equation}
	The corresponding singular vectors in Verma modules have the form
	\begin{equation} \label{eq:chi21}
	   \chi_{2,1}=\left(L_{-1}-\varepsilon\frac{\beta}{\sqrt{2}}G_{-1}\right)\ket{\Delta_{2,1}}, \quad \chi_{1,2}=\left(L_{-1}+\varepsilon\frac{1}{\sqrt{2} \beta}G_{-1}\right)\ket{\Delta_{1,2}}.
	\end{equation}
	After bosonization by Lemma \ref{vir} we get 
	\begin{subequations}
		\begin{align} 
		    \chi_{2,1}&\mapsto \left(\beta-\beta^{-1}\right)\left(a_{-1}-\varepsilon\sqrt{2}\beta f_{-1}\right)\ket{\alpha_{2,1}}, 
		    \\ \chi_{1,2}&\mapsto \left(\beta-\beta^{-1}\right)\left(a_{-1}+\varepsilon\sqrt{2}\beta^{-1} f_{-1}\right)\ket{\alpha_{2,1}}.
		\end{align}
	\end{subequations}
   	Finally, we apply odd bosonization from Lemma \ref{iso} 
    and get Uglov symmetric functions up to constant factor
    \begin{equation}
        \chi_{2,1}\sim p_2+\beta^2 p_1^2\sim P_{\left(2\right)}^{1/\beta^2,2}, \quad \chi_{1,2}\sim -p_2+p_1^2\sim P_{\left(1,1\right)}^{1/\beta^2,2}.
    \end{equation}
\end{example}

\begin{remark}	\label{Rem:sing vectors}
	As we discussed in Remark \ref{Rem:M:tilde} there is another version of Verma modules in Ramond sector. As was explained, if \(\Delta \neq c/24 \) there is an isomorphism  \(\widetilde{M}(c,\Delta)=M(c,\lambda)\oplus M(c,-\lambda)\). The Theorem \ref{Th:sing:NSR} gives formulas for singular vectors in \(M(c,\lambda), M(c,-\lambda)\) and this can be used for the singular vectors on \(\widetilde{M}(c,\Delta)\).

	To be more precise, let $\chi_{r,s}^+$ be a singular vector in $M\left(c,\lambda_{r,s}\right)$  and $\chi_{r,s}^-$ in $M\left(c,-\lambda_{r,s}\right)$. There is an automorphism $\sigma$ on NSR algebra that acts on generators $\sigma\left(G_k\right) = - G_k$,  $\sigma(L_n)=L_n$. Clearly, it changes sign of \(\lambda\), hence permute formulas for \(\chi_{r,s}^+\) and \(\chi_{r,s}^-\). Therefore, we can write 	
	\begin{equation}
		\chi_{r,s}^+=\left(\mathcal{D}_{r,s}^0+\mathcal{D}_{r,s}^1\right)\ket{\Delta_{r,s},\lambda_{r,s}}, \quad \chi_{r,s}^-=\left(\mathcal{D}_{r,s}^0-\mathcal{D}_{r,s}^1\right)\ket{\Delta_{r,s},-\lambda_{r,s}},
	\end{equation}
	where operator \(\mathcal{D}_{r,s}^0\) is even (i.e. invariant under \(\sigma\)) and \(\mathcal{D}_{r,s}^1\) is odd.
%
%
%
	Then, the even and odd singular vectors in $\widetilde{M}\left(c,\Delta_{r,s}\right)$ have the form
%
	\begin{subequations}
		\begin{align} \label{eq:chi:even}
			\widetilde{\chi}_{r,s}^{even}&=\frac{\chi_{r,s}^+ - \chi_{r,s}^-}{2 \lambda_{r,s}}=\left(\mathcal{D}^0_{r,s}+\mathcal{D}^1_{r,s}\frac{G_0}{\lambda_{r,s}}\right)\ket{\Delta^+_{r,s}},
			\\		\label{eq:chi:odd}
			\widetilde{\chi}_{r,s}^{odd}&=\frac{\chi_{r,s}^+ + \chi_{r,s}^-}{2}=\left(\mathcal{D}^0_{r,s}+\mathcal{D}^1_{r,s}\frac{G_0}{\lambda_{r,s}}\right)\ket{\Delta^-_{r,s}},
		\end{align}
	\end{subequations}
	where the relation between $\ket{\Delta^{\pm}_{r,s}}$ and \(\ket{\Delta_{r,s},\pm\lambda_{r,s}}\) was given in the formulas \eqref{altR}.
\end{remark}	
\begin{example} Let \((r,s)=(2,1)\). Then combining formulas \eqref{eq:chi21} and \eqref{eq:chi:even}-\eqref{eq:chi:odd} we get 
	\begin{equation}
		\widetilde{\chi}_{2,1}^{even}=\left(L_{-1}-\frac{2\beta^2}{2\beta^2-1}G_{-1}G_0\right)\ket{\Delta_{2,1}^+}, \quad  \widetilde{\chi}_{2,1}^{odd}=\left(L_{-1}-\frac{2\beta^2}{2\beta^2-1}G_{-1}G_0\right)\ket{\Delta_{2,1}^-}.
	\end{equation}
\end{example}

\subsection{Proof of the Theorem \ref{Th:sing:NSR}}
We should prove the following properties of Uglov polynomial 
\begin{equation}\label{eq:ToProve}
	G_kP^{1/\beta^2,2}_{\left(r^s\right)}=0,\ k\in \mathbb{Z}_{>0} + \delta,\quad L_n P^{1/\beta^2,2}_{\left(r^s\right)}=0,\ n\in \mathbb{Z}_{>0}, \quad P^{1/\beta^2,2}_{\left(r^s\right)}\in F_{\alpha_{r,s}}.
\end{equation}
It follows from Lemma \ref{Maclem} that  \(P_{\lambda}\l(-e^{\hbar},-e^{\gamma \hbar}\r) = P_{\lambda}\l(-e^{-\hbar},-e^{-\gamma \hbar}\r)\). Therefore, we have 
\begin{equation}
    P_{(r^s)}\l(q,t\r)= P_{(r^s)}^{\gamma,2}+O\l(\hbar^2\r).
\end{equation}
Recall the formula for the \(\hbar\) expansion of \(T(z)\)
\begin{equation}
   q^{-1/2}t^{1/2}u^{-1} T\l(z\r)=T^0\l(z\r)+T^1\l(z\r)\hbar+O\l(\hbar^2\r).
\end{equation}
Using these series expansions in the formula \eqref{eq:Tn:Macdonald} we get (for any \(n>0\)).
\begin{equation}
\label{TnP}
    \l(T^0_n+T^1_n \hbar +O\l(\hbar^2\r) \r)\l( P_{(r^s)}^{\gamma,2}+O\l(\hbar^2\r)\r) =0 \Rightarrow \left\{ \begin{aligned} T^0_nP_{(r^s)}^{\gamma,2}=0, \\
        T^1_n P_{(r^s)}^{\gamma,2}=0. 
    \end{aligned}\right. 
\end{equation}
Hence, using Theorems \ref{T0T1NS} and \ref{T0T1R} with \(\gamma=1/\beta^2\) we obtain
\begin{gather}
    f_n^{add}P^{1/\beta^2,2}_{\left(r^s\right)}=0, \;\; \forall n>0, \label{eq:f:Uglov} \\
    G_k P^{1/\beta^2,2}_{\left(r^s\right)}=0, \;\; \forall k>0. \label{eq:G:Uglov}
\end{gather}
\begin{lemma}\label{Lem:Fadd}
    For any element $v\in \Lambda \cong F_{\alpha}\otimes F^{add}$ the condition \(f^{add}_k v=0\) \(\forall k>0\) is equivalent to 
        \(v\in F_{\alpha}\otimes \ket{1}\)
\end{lemma}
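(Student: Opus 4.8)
The plan is to use the tensor factorization $\Lambda \cong F_{\alpha}\otimes F^{add}$ from Corollary~\ref{Cor:iso} and to reduce the claim to a statement about the single free fermion that generates $F^{add}$. The first observation is that the modes $f^{add}_k$ with $k>0$ act only on the second tensor factor: for homogeneous $u\in F_{\alpha}$ and $w\in F^{add}$ one has $f^{add}_k(u\otimes w)=(-1)^{|u|}\,u\otimes f^{add}_k w$, where the Koszul sign is immaterial below. Fixing a homogeneous basis $\{u_i\}$ of $F_{\alpha}$ and writing $v=\sum_i u_i\otimes w_i$, the equation $f^{add}_k v=0$ reads $\sum_i (\pm)\,u_i\otimes f^{add}_k w_i=0$, so linear independence of the $u_i$ forces $f^{add}_k w_i=0$ for every $i$ and every $k>0$. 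Hence everything reduces to showing that inside $F^{add}$ the only vectors annihilated by all positive modes are the scalar multiples of the vacuum $\ket{1}$; granting this, $v=\l(\sum_i c_i u_i\r)\otimes\ket{1}\in F_{\alpha}\otimes\ket{1}$.

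For the core step I would use the contravariant bilinear form on $F^{add}$, normalized so that $f^{add}_{-k}$ and $f^{add}_{k}$ are mutually adjoint for $k>0$; this form is nondegenerate and respects the grading $\deg f^{add}_k=-2k$. If $w\in F^{add}$ satisfies $f^{add}_k w=0$ for all $k>0$, then $\langle f^{add}_{-k}u,\,w\rangle=\langle u,\,f^{add}_k w\rangle=0$ for every $u$ and every $k>0$, so $w$ is orthogonal to the subspace $\sum_{k>0}f^{add}_{-k}\,F^{add}$. Every basis monomial of positive degree contains at least one creation operator $f^{add}_{-k}$ and therefore lies in this subspace, which thus exhausts the entire positive-degree part of $F^{add}$. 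Since the form is nondegenerate on each finite-dimensional graded component, $w$ must lie in the degree-zero line $\mathbb{C}\ket{1}$, as desired. Equivalently, one may simply invoke irreducibility of the single-fermion Fock module, for which the highest weight vector is the unique (up to scale) vector killed by all annihilation modes.

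The argument is essentially standard, and the genuine content is precisely this uniqueness --- equivalently, nondegeneracy of the contravariant form or irreducibility of $F^{add}$. The only points demanding care are the super-tensor-product signs in the reduction (harmless, being nonzero scalars) and the fact that in the NS sector the additional fermion $f^{add}=f^{R}$ carries a zero mode $f_0$: one must check that $f_0$ preserves the line $\mathbb{C}\ket{1}$ and creates no new degree-zero states, which holds since $f_0\ket{1}=\tfrac{\varepsilon}{\sqrt{2}}\ket{1}$ by Definition~\ref{def:Fock:R}. In the R sector $f^{add}=f^{NS}$ has only half-integer modes and this subtlety does not even arise.
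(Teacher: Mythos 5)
Your proof is correct, and it establishes exactly the fact the paper asserts: the paper's entire proof is the one-line observation that $v$ contains a term involving some $f^{add}_{-k}$ if and only if $f^{add}_k v\neq 0$, which is the same standard Fock-module argument you carry out in detail via the contravariant form (or irreducibility). Your extra care with the tensor-factor reduction and the zero mode of $f^R$ in the NS sector is sound but not a different route.
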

\begin{proof} 
	Clearly \(v\) contains some terms with \(f_{-k}^{add}\) if and only if \(f_k^{add} v \neq 0 \). 
\end{proof}
Using the Lemma \ref{Lem:Fadd} we conclude that 
\(   P^{1/\beta^2,2}_{\left(r^s\right)}\in F_{\alpha_{r,s}}\otimes \ket{1}.
\)

Now we consider two sectors separately. In the Neveu-Schwarz sector we use \(\{G_{1/2},G_{n-1/2}\}=2L_n\) for \(n>0\) and conclude from \eqref{eq:G:Uglov} that \( L_n P^{1/\beta^2,2}_{\left(r^s\right)}=0, n>0\). Hence, properties \eqref{eq:ToProve} are proven.
%
%

In the Ramond sector we will additionally use property 
\begin{equation}\label{eq:G0:Uglov:tilde}
	G_0 P^{1/\beta^2,2}_{\left(r^s\right)} \sim P^{1/\beta^2,2}_{\left(r^s\right)}.
\end{equation}
which we will prove in Section \ref{sec6}. Using the formula \(\{G_{0},G_{n}\}=2L_n\) for \(n>0\) and relation~\eqref{eq:G:Uglov} we conclude that \( L_n P^{1/\beta^2,2}_{\left(r^s\right)}=0, n>0\). Hence, properties \eqref{eq:ToProve} are proven.~\(\square\)
%
%

\section{Limit of Macdonald operator}
\label{sec6}
In this section we study the limit of Macdonald operator and it's eigenvalues. Recall the formula~\eqref{eq:Macd:oper:N} for the operator \(D_{q,t}\) acting on \(\Lambda_{N}\). We will use the following formula for bosonization of Macdonald operator. 
\begin{fact}
    Let \(\pi_N\colon \Lambda \rightarrow \Lambda_N\) denotes natural projection. Then the operator \(\mathrm{D}_{q, t}\) on  \(\Lambda\) given by the formula
    \begin{equation}
        \mathrm{D}_{q, t}=\frac{t^{N}}{t-1} \oint \frac{dz}{2\pi \ri z} \eta(z)-\frac{1}{t-1},
    \end{equation}
    where 
    \begin{equation}
        \eta(z)=\exp \left(\sum_{n>0}\left(1-t^{-n}\right) \frac{p_{n}}{n} z^{n}\right) \exp \left(-\sum_{n>0}\left(1-q^{n}\right) \frac{\partial}{\partial p_{n}} z^{-n}\right),
    \end{equation}
	satisfies \(\pi_N \mathrm{D}_{q,t} = D_{q,t} \pi_N\).
\end{fact}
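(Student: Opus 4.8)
The plan is to verify the intertwining relation by letting the abstract bosonic operator $\mathrm{D}_{q,t}$ act on an arbitrary $f\in\Lambda$, projecting the result to $N$ variables, and matching it with $D_{q,t}$. A point to fix from the outset is that the derivations $\partial/\partial p_n$ occurring in $\eta(z)$ must be read on the free polynomial algebra on the power sums $p_1,p_2,\dots$, where the $p_n$ are algebraically independent; in $\Lambda_N$ the power sums satisfy relations, so the projection $\pi_N$ (which sets $x_{N+1}=x_{N+2}=\dots=0$) may be applied only after the operator has acted. Since $\pi_N$ is an algebra homomorphism and commutes with the contour integration in $z$, it suffices to compute $\pi_N\bigl(\eta(z)f\bigr)$ and extract the constant term in $z$. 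I would first rewrite the creation factor of $\eta(z)$ in the variables $x_i$: using $\exp\bigl(\sum_{n>0}\tfrac{p_n}{n}w^n\bigr)=\prod_i(1-x_iw)^{-1}$, it becomes $\prod_i\frac{1-x_iz/t}{1-x_iz}$, which under $\pi_N$ is the rational function $A(z)=\prod_{i=1}^N\frac{1-x_iz/t}{1-x_iz}$, regular at $z=0$ with simple poles at $z=1/x_i$.

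The key step is to recognize the annihilation factor as a shift. Since $\exp\bigl(\sum_n c_n\,\partial/\partial p_n\bigr)$ is the translation $p_n\mapsto p_n+c_n$, the operator $\exp\bigl(-\sum_{n>0}(1-q^n)\tfrac{\partial}{\partial p_n}z^{-n}\bigr)$ acts as the algebra homomorphism $S_z\colon p_n\mapsto p_n-(1-q^n)z^{-n}$, which on generating functions is the plethystic operation of deleting a variable at $1/z$ and inserting one at $q/z$. The crucial observation is its value at the poles of $A(z)$: projecting to $N$ variables and specializing $z=1/x_i$, one finds $S_z p_m\mapsto \sum_{j}x_j^m-x_i^m+(qx_i)^m=\sum_{j\neq i}x_j^m+(qx_i)^m$, which is $T_{q,x_i}p_m$. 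As $S_z$ and $T_{q,x_i}$ are both algebra homomorphisms agreeing on the generators $p_m$, this gives $\pi_N(S_zf)\big|_{z=1/x_i}=T_{q,x_i}\pi_N(f)$. Thus the annihilation part turns into exactly the Macdonald $q$-shift at the poles coming from the creation part.

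Finally I would evaluate $\oint\frac{dz}{2\pi\ri z}\,A(z)\,\pi_N(S_zf)$ by residues, writing the constant term as minus the sum of the residues at $z=1/x_i$ and at $z=\infty$. A direct computation gives $\operatorname{Res}_{z=1/x_i}\frac{A(z)}{z}=-\frac{t-1}{t^N}\prod_{j\neq i}\frac{tx_i-x_j}{x_i-x_j}$, so together with the pole-evaluation identity of the previous paragraph the poles at $z=1/x_i$ assemble into $\frac{t-1}{t^N}D_{q,t}\pi_N(f)$. At infinity $A(z)\to t^{-N}$ and $S_zf\to f$, contributing $\frac{1}{t^N}\pi_N(f)$. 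Hence $\oint\frac{dz}{2\pi\ri z}\pi_N(\eta(z)f)=\frac{t-1}{t^N}D_{q,t}\pi_N(f)+\frac{1}{t^N}\pi_N(f)$, and multiplying by $\frac{t^N}{t-1}$ and subtracting $\frac{1}{t-1}\pi_N(f)$ yields precisely $D_{q,t}\pi_N(f)$; the additive counterterm $-\frac{1}{t-1}$ is exactly what cancels the $z=\infty$ contribution.

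The main obstacle is the global contour bookkeeping in this last step: correctly identifying which residues the constant-term extraction picks up, tracking the powers of $t$ in the residue at $z=1/x_i$, and recognizing that the residue at infinity is exactly absorbed by the counterterm $-\frac{1}{t-1}$. The only genuinely structural input is the pole-evaluation identity $\pi_N(S_zf)\big|_{z=1/x_i}=T_{q,x_i}\pi_N(f)$; once that is established, the rest is residue calculus, with the one subtlety that $\pi_N$ must be applied only after the derivations $\partial/\partial p_n$ have acted.
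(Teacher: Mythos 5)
Your argument is correct: the rewriting of the creation factor as $\prod_{i=1}^{N}\frac{1-x_iz/t}{1-x_iz}$ under $\pi_N$, the identification of the plethystic shift $S_z$ evaluated at $z=1/x_i$ with the Macdonald shift $T_{q,x_i}$ (valid because both are algebra homomorphisms agreeing on the $p_m$), and the residue bookkeeping — $\operatorname{Res}_{z=1/x_i}\frac{A(z)}{z}=-\frac{t-1}{t^{N}}\prod_{j\neq i}\frac{tx_i-x_j}{x_i-x_j}$ together with the residue at infinity being exactly absorbed by the counterterm $-\frac{1}{t-1}$ — all check out, as does your care in applying $\pi_N$ only after the derivations $\partial/\partial p_n$ have acted. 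The paper gives no proof of this Proposition, deferring to the reference it cites (AKOS, App.~B); your computation is precisely the standard derivation found there, so you have simply supplied the details the paper omits.
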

See e.g. \cite[App. B]{AKOS:1996} for the proof of this formula. It follows from the formula for eigenvalue in Theorem \ref{Th:Macd:polyn} that
\begin{equation}\label{eq:Dqt:Macdonald}
	\oint \frac{dz}{2\pi \ri z} \eta(z) P_{\mu}\l(q,t\r)=\mathcal{E}_{\mu}\l(q,t\r)P_{\mu}\l(q,t\r), \quad \mathcal{E}_{\mu}(q, t)=1+(t-1) \sum_{i=1}^{\ell(\mu)}\left(q^{\mu_{i}}-1\right) t^{-i}.
\end{equation}

Consider the $\hbar\rightarrow 0$ limit of this equation, where as before $q=-e^{\hbar}$, $t=-e^{\gamma \hbar}$. Macdonald symmetric function becomes Uglov symmetric function $P_{\mu}\l(q,t\r)\rightarrow P^{\gamma,2}_{\mu}+O\l(\hbar^2\r)$ and 
\begin{equation}
	\eta(z)=C^0(z)+ \hbar C^1(z)+O\l(\hbar^2\r), \quad \mathcal{E}_{\mu}(q, t) = \mathcal{E}^0_{\mu}+\hbar\mathcal{E}^1_{\mu}+O\l(\hbar^2\r).
\end{equation}
Following \cite{SKAO}, we can express Macdonald operator through modes of generating current of $Vir_{q,t}$
\begin{equation}\label{op}
	\oint \frac{dz}{2 \pi \ri z} \eta(z)=\sum\limits_{n=0}^{\infty} \psi_{-n}T_{n}-q^{-1}tu^{-2},
\end{equation}
where $\psi_n$ are modes of the following operator
\begin{equation}\label{psi}
	\psi(z)=\sum_{n=0}^{\infty} \psi_{-n} z^{n}=q^{-1 / 2}t^{1/2} \exp \left\{-\sum_{n=1}^{\infty} \frac{1-t^{n}}{t^n+q^{n}}\frac{q^{n/2}}{t^{n/2}} \frac{p_n}{n} z^{n}  \right\} u^{-1}.
\end{equation}
Let \(\psi(z)\) has the following $\hbar$ expansion
\begin{equation}
        q^{1/2}t^{1/2}u\psi(z)=\psi^0(z)+\psi^1(z)\hbar+O\left(\hbar^2\right).
\end{equation}
In the zero \(\hbar \) order of Macdonald operator and eigenvalue  \eqref{eq:Dqt:Macdonald} we get
\begin{equation}
	\oint \frac{dz}{2 \pi \ri z} C^0(z)=\sum\limits_{n=0}^{\infty} \psi^0_{-n}T^0_{n}\mp 1,\quad \mathcal{E}^0_{\mu}=1-2 \sum_{i=1}^{\ell(\mu)}\big((-1)^{\mu_{i}}-1\big) (-1)^{i}.
\end{equation}
In the first \(\hbar\) order we get
\begin{equation}
	\oint \frac{dz}{2\pi \ri z}C^1(z)=\sum_{n=0}^{\infty}\l( \psi^0_{-n} T^1_{n}+\psi^1_{-n} T^0_{n}\r)\pm\l(1-2 \beta^{-1}\alpha -\gamma\r),
\end{equation}
\begin{equation}\label{eq:E1}
	\mathcal{E}_{\mu}^{1}(\gamma)=-\sum\limits_{i=1}^{\ell(\mu)}(-1)^{i}\Big( 2(-1)^{\mu_{i}} \mu_{i}+\gamma(1-2 i)\big((-1)^{\mu_{i}}-1\big)\Big).
\end{equation}
Here the upper sign corresponds to NS sector and the lower sign corresponds to  R sector. Let us study eigenvalues of these operators on singular vector which is represented by Uglov symmetric function $P_{\left(r^s\right)}^{1/\beta^2, 2}$. We consider each sector separately.

\noindent \textbf{Neveu-Schwarz sector.} Using Theorem \ref{T0T1NS} and formulas \eqref{eq:f:Uglov}-\eqref{eq:G:Uglov} we have
\begin{subequations}
	\begin{align}
        \oint \frac{dz}{2\pi \ri z}C^0(z)P^{1/\beta^2,2}_{(r^s)} &= \l(\varepsilon 2 \sqrt{2}f^{R}_0-1\r)P^{1/\beta^2,2}_{(r^s)},
        \\
        \oint \frac{dz}{2\pi \ri z}C^1(z)P^{1/\beta^2,2}_{(r^s)} &= \l(\varepsilon\sqrt{2}f_0^{R}-\varepsilon\sqrt{2}f^{R}_0\l(2-2\alpha_{r,s}\beta^{-1}-\beta^{-2}\r)+1-2\alpha_{r,s}\beta^{-1}-\beta^{-2}\r)P^{1/\beta^2,2}_{(r^s)}.
	\end{align}
\end{subequations}
Let $\widehat{\Pi}$ denotes parity operator on Fock module \(\Lambda \cong \mathrm{F}_{\alpha,\varepsilon}\), it acts as \(1\) on vectors with even number of fermions and as \(-1\) on vectors with odd number of fermions. It is easy to see that $f^R_0$ acts on Fock module as $\frac{\varepsilon }{\sqrt{2}}\widehat{\Pi}$. Clearly \(P^{1/\beta^2,2}_{(r^s)}\) is even if both \(r,s\) are even and \(P^{1/\beta^2,2}_{(r^s)}\) is odd if both \(r,s\) are odd. 

Now we can see agreement between expansion of the eigenvalue in \(\mathcal{E}_{(r^s)}\) and our operator formulas. In zero order 
\begin{itemize}
	\item If $r,s$ are even we have $\oint \frac{dz}{2\pi \ri z}C^0(z)P^{1/\beta^2,2}_{(r^s)}=1 P^{1/\beta^2,2}_{(r^s)}$ which agrees with $\mathcal{E}^0_{\left(r^s\right)}=1$,
	\item If $r, s$ are odd we have $\oint \frac{dz}{2\pi \ri z}C^0(z)P^{1/\beta^2,2}_{(r^s)}=-3 P^{1/\beta^2,2}_{(r^s)}$ which agrees with $\mathcal{E}^0_{\left(r^s\right)}=-3$.
 \end{itemize}
In first order 
\begin{itemize}
	\item If $r,s$ are even we have $ \oint \frac{dz}{2\pi \ri z}C^1(z)P^{1/\beta^2,2}_{(r^s)}=0$ which agrees with $\mathcal{E}^1_{\left(r^s\right)} = 0 $,
	\item If $r, s$ are odd we have $ \oint \frac{dz}{2\pi \ri z}C^1(z)P^{1/\beta^2,2}_{(r^s)}=\left(2-4\beta^{-1}\alpha_{r,s}-2\beta^{-2} \right)P^{1/\beta^2,2}_{(r^s)}$ which agrees with $\mathcal{E}^1_{\left(r^s\right)} = \left(-2r+2\beta^{-2} s\right)$.
\end{itemize}

\noindent  \textbf{Ramond sector.} Using Theorem \ref{T0T1R} and formulas \eqref{eq:f:Uglov}-\eqref{eq:G:Uglov} we have
\begin{subequations}
    \begin{align}
	    \label{Reig}
        \oint \frac{dz}{2 \pi \ri z} C^0(z)P^{1/\beta^2,2}_{(r^s)} &= \left(T^0_{0}+1\right)P^{1/\beta^2,2}_{(r^s)}=P^{1/\beta^2,2}_{(r^s)},
		\\
    	\label{RC1}
        \l(\oint \frac{dz}{2\pi \ri z}C^1(z)\r) P^{1/\beta^2,2}_{(r^s)} &=  \l(-\varepsilon 2\sqrt{2}\beta^{-1} G_0-\l(1-2 \beta^{-1}\alpha_{r,s} -\gamma\r)\r)P^{1/\beta^2,2}_{(r^s)}.
    \end{align}
\end{subequations}
Note that in the formula \eqref{RC1} the mode \(G_0\) is bosonized through fermion \(f^R\). 
Formula \eqref{Reig} agrees with $\mathcal{E}^0_{\left(r^s\right)}=1$ for $r$ and $s$ of different parity. Now agreement between the operator formula \eqref{RC1} and the eigenvalue formula  \eqref{eq:E1} leads to 
\begin{equation}\label{eq:G0:Uglov}
	G_0 P^{1/\beta^2,2}_{(r^s)}=(-1)^s\varepsilon \frac{r\beta+s\beta^{-1}}{2\sqrt{2}} P^{1/\beta^2,2}_{(r^s)}.
\end{equation}
This was used in the proof of Theorem \ref{Th:sing:NSR}, see relation \eqref{eq:G0:Uglov:tilde}. Together with the statement of Theorem \ref{Th:sing:NSR} this gives a new proof of Proposition \ref{sign}. 

Indeed, substituting expression \eqref{eq:G0:Uglov} to the formula \eqref{RC1} we get 
\begin{equation}
    \label{RC2}
        \l(\oint \frac{dz}{2\pi \ri z}C^1(z)\r) P^{1/\beta^2,2}_{(r^s)}=\left(\left(-1\right)^{s+1}\left(r+s\beta^{-2}\right)+\left(r-s\beta^{-2}\right)\right)P^{1/\beta^2,2}_{(r^s)}.
\end{equation}
And this agrees with formula \eqref{eq:E1} for \(\mathcal{E}_{\left(r^s\right)}^{1}\) 
\begin{itemize}
	\item If $r$ is even and $s$ is odd we have $\l(\oint \frac{dz}{2\pi \ri z}C^1(z)\r) P^{1/\beta^2,2}_{(r^s)}=2r P^{1/\beta^2,2}_{(r^s)}$ which agrees with $\mathcal{E}_{\l(r^s\r)}^{1}=-\sum\limits_{i=1}^{s}(-1)^{i}2 r =2r$,
    \item If $r$ is odd and $s$ is even we have $\l(\oint \frac{dz}{2\pi \ri z}C^1(z)\r) P^{1/\beta^2,2}_{(r^s)}=-2s\beta^{-2} P^{1/\beta^2,2}_{(r^s)}$ which agrees with $\mathcal{E}_{\l(r^s\r)}^{1}=\sum\limits_{i=1}^{s}(-1)^{i}\l(\beta^{-2}\l(1-2i\r)2\r)=-2s\beta^{-2}$.
\end{itemize}


    \bibliographystyle{alpha}
        \bibliography{references}

\begin{thebibliography}{BFMRW16}

\bibitem[AKOS96]{AKOS:1996}
H.~Awata, H.~Kubo, S.~Odake, and J.~Shiraishi.
\newblock Quantum {${W}_N$} algebras and {M}acdonald polynomials.
\newblock {\em Comm. Math. Phys.}, 179(2):401--416, 1996.
\newblock [{\href{https://arxiv.org/abs/q-alg/9508011}
  {\texttt{arXiv:q-alg/9508011}}]}.

\bibitem[AMOS95]{AWATA1995}
H.~{Awata}, Y.~{Matsuo}, S.~{Odake}, and J.~{Shiraishi}.
\newblock {Excited states of the Calogero-Sutherland model and singular vectors
  of the \(W_N\) algebra}.
\newblock {\em {Nucl. Phys., B}}, 449(1-2):347--374, 1995.
\newblock [{\href{https://arxiv.org/abs/hep-th/9503043}
  {\texttt{arXiv:hep-th/9503043}}]}.

\bibitem[AVDM13]{Alarie-Vezina:2013}
Ludovic Alarie-V\'ezina, Patrick Desrosiers, and Pierre Mathieu.
\newblock {Ramond singular vectors and Jack superpolynomials}.
\newblock {\em J. Phys. A}, 47:035202, 2013.
\newblock [{\href{https://arxiv.org/abs/1309.7965}
  {\texttt{arXiv:1309.7965}}]}.

\bibitem[BBT13]{BBT2013}
A.~A. {Belavin}, M.~A. {Bershtein}, and G.~M. {Tarnopolsky}.
\newblock {Bases in coset conformal field theory from AGT correspondence and
  Macdonald polynomials at the roots of unity}.
\newblock {\em {J. High Energy Phys.}}, 2013(3):19, 2013.
\newblock [{\href{https://arxiv.org/pdf/1211.2788}
  {\texttt{arXiv:1211.2788}}]}.

\bibitem[BFMRW16]{Blondeau-Fournier:2016}
O.~Blondeau-Fournier, P.~Mathieu, D.~Ridout, and S.~Wood.
\newblock The super-{V}irasoro singular vectors and {J}ack superpolynomials
  relationship revisited.
\newblock {\em Nucl. Phys. B}, 913:34--63, 2016.
\newblock [{\href{https://arxiv.org/abs/1605.08621}
  {\texttt{arXiv:1605.08621}}]}.

\bibitem[BGM18]{Bershtein2018}
M.~{Bershtein}, P.~{Gavrylenko}, and A.~{Marshakov}.
\newblock {Twist-field representations of W-algebras, exact conformal blocks
  and character identities}.
\newblock {\em {J. High Energy Phys.}}, 2018(8):56, 2018.
\newblock [{\href{https://arxiv.org/abs/1705.00957}
  {\texttt{arXiv:1705.00957}}]}.

\bibitem[CF88]{Friedan:1988}
J.~D. Cohn and D.~Friedan.
\newblock Super characters and chiral asymmetry in superconformal field theory.
\newblock {\em Nuclear Phys. B}, 296(4):779--799, 1988.

\bibitem[DLM12]{Desrosiers:2012}
Patrick Desrosiers, Luc Lapointe, and Pierre Mathieu.
\newblock Superconformal field theory and {J}ack superpolynomials.
\newblock {\em J. High Energy Phys.}, (9):037, front matter+41, 2012.
\newblock [{\href{https://arxiv.org/abs/1508.06036}
  {\texttt{arXiv:1508.06036}}]}.

\bibitem[IK03]{Iohara:20031}
Kenji {Iohara} and Yoshiyuki {Koga}.
\newblock {Representation theory of Neveu-Schwarz and Ramond algebras. I: Verma
  modules.}
\newblock {\em {Adv. Math.}}, 178(1):1--65, 2003.
\newblock [{\href{https://doi.org/10.1016/S0001-8708(02)00059-2}
  {\texttt{doi.org/10.1016/S0001-8708(02)00059-2}}]}.

\bibitem[IOY13]{Itoyama:2013}
H.~{Itoyama}, T.~{Oota}, and R.~{Yoshioka}.
\newblock {2d-4d connection between \(q\)-Virasoro/\(W\) block at root of unity
  limit and instanton partition function on ALE space}.
\newblock {\em {Nucl. Phys., B}}, 877(2):506--537, 2013.
\newblock [{\href{https://arxiv.org/abs/1308.2068}
  {\texttt{arXiv:1308.2068}}]}.

\bibitem[IOY14]{Itoyama:2014}
H.~Itoyama, T.~Oota, and R.~Yoshioka.
\newblock {$q$}-{V}irasoro/{W} algebra at root of unity and parafermions.
\newblock {\em Nuclear Phys. B}, 889:25--35, 2014.
\newblock [{\href{https://arxiv.org/abs/1408.4216}
  {\texttt{arXiv:1408.4216}}]}.

\bibitem[KK22]{Kimura:2022}
Osama Khlaif and Taro Kimura.
\newblock {Virasoro Constraint for Uglov Matrix Model}.
\newblock {\em J. High Energy Phys.}, (4):29, 2022.
\newblock [{\href{https://arxiv.org/abs/2201.06839}
  {\texttt{arXiv:2201.06839}}]}.

\bibitem[KW86]{Kac:1986}
Victor~G. Kac and Minoru Wakimoto.
\newblock Unitarizable highest weight representations of the {V}irasoro,
  {N}eveu-{S}chwarz and {R}amond algebras.
\newblock In {\em Conformal groups and related symmetries: physical results and
  mathematical background ({C}lausthal-{Z}ellerfeld, 1985)}, volume 261 of {\em
  Lecture Notes in Phys.}, pages 345--371. Springer, Berlin, 1986.

\bibitem[Mac98]{MacDonald1998}
I.~G. MacDonald.
\newblock \emph{Symmetric functions and {H}all polynomials. 2nd ed.}
\newblock Oxford University Press, 1998.

\bibitem[MY95]{mimachi1995}
Katsuhisa Mimachi and Yasuhiko Yamada.
\newblock \emph{Singular vectors of the {V}irasoro algebra in terms of {J}ack
  symmetric polynomials}.
\newblock {\em Comm. Math. Phys.}, 174(2):447--455, 1995.
\newblock [{\href{https://projecteuclid.org/euclid.cmp/1104275301}
  {\texttt{euclid:1104275301}}]}.

\bibitem[SKAO96]{SKAO}
J.~{Shiraishi}, H.~{Kubo}, H.~{Awata}, and S.~{Odake}.
\newblock {A quantum deformation of the Virasoro algebra and the Macdonald
  symmetric functions}.
\newblock {\em {Lett. Math. Phys.}}, 38(1):33--51, 1996.
\newblock [{\href{https://arxiv.org/abs/q-alg/9507034}
  {\texttt{arXiv:q-alg/9507034}}]}.

\bibitem[{Ugl}98]{Uglov:1998}
Denis {Uglov}.
\newblock {Yangian Gelfand-Zetlin bases, \({\mathfrak gl}_N\)-Jack polynomials
  and computation of dynamical correlation functions in the spin
  Calogero-Sutherland model}.
\newblock {\em {Commun. Math. Phys.}}, 191(3):663--696, 1998.
\newblock [{\href{https://arxiv.org/abs/hep-th/9702020}
  {\texttt{arXiv:hep-th/9702020}}]}.

\bibitem[{Wat}93]{Watts_1993}
G.~M.~T. {Watts}.
\newblock {Null vectors of the superconformal algebra: the Ramond sector.}
\newblock {\em {Nucl. Phys., B}}, 407(1):213--236, 1993.
\newblock [{\href{https://arxiv.org/abs/hep-th/9306034}
  {\texttt{arXiv:hep-th/9306034}}]}.

\bibitem[Yan15]{Yanagida}
Shintarou Yanagida.
\newblock {Singular vectors of $N=1$ super Virasoro algebra via Uglov symmetric
  functions}, 2015.
\newblock Preprint [{\href{https://arxiv.org/abs/1508.06036}
  {\texttt{arXiv:1508.06036}}]}.

\end{thebibliography}
        
		\noindent \textsc{Landau Institute for Theoretical Physics, Chernogolovka, Russia,\\
	Center for Advanced Studies, Skoltech, Moscow, Russia,\\
	National Research University Higher School of Economics, Moscow, Russia}

\emph{E-mail}:\,\,\textbf{mbersht@gmail.com}\\

\noindent\textsc{National Research University Higher School of Economics, Moscow, Russia}

\emph{E-mail}:\,\,\textbf{angel\_1997@inbox.ru}	        
\end{document}